\documentclass[10pt,letterpaper]{article}  
  
\usepackage[margin=1in]{geometry}  
\usepackage{amsmath,amssymb,amsthm}  
\usepackage{graphicx}  
\usepackage{hyperref}  
\usepackage{enumitem}  
\usepackage{booktabs}  
\usepackage{tikz}  
\usetikzlibrary{arrows.meta,shapes,positioning,calc}  
  
\newtheorem{definition}{Definition}[section]  
\newtheorem{theorem}{Theorem}[section]  
  
\newtheorem{proposition}[theorem]{Proposition}  
  
\theoremstyle{remark}  
\newtheorem{remark}{Remark}[section]  
\newtheorem{example}{Example}[section]  
\newtheorem{conjecture}[theorem]{Conjecture}  
  
\title{\textbf{Privilege Risk Evolution for Non-Human Identities: \\A Temporal Fiber Model for Cloud IAM}}  
  
\author{  
    Christophe Parisel\thanks{Email: ch.parisel@gmail.com}}  
\date{}  
  
\begin{document}  
  
\maketitle  
\begin{abstract}  
Cloud permission governance implicitly treats permission equivalence as a static relation. We show that for non-human identities (NHIs), equivalence has two irreducible components: \emph{structural equivalence}, capturing identical permission profiles at a snapshot via graph fibration, and \emph{temporal equivalence}, capturing recurring permission states via strongly connected components (SCCs) in a fiber transition graph. We call the equivalence classes under temporal equivalence \emph{privilege circuits}.  
  
We formalize a three-layer framework: (1)~a spatial quotient of the permission graph via fibration, (2)~a lineage partition organizing stable transition compartments, and (3)~windowed SCC analysis as a temporal quotient within lineages. 
  
Empirical evaluation on a large Azure tenant supports the framework. Backtesting demonstrates that early observation of ratchet-type privilege circuits predicts long-term structural stability. 
  
Our results demonstrate that the privilege circuit structure of cloud systems exposes persistent escalation patterns invisible to point-in-time analysis.  
\end{abstract}  
  
\section{Introduction}\label{sec:intro}  
  
Consider a non-human identity in a cloud environment: a service principal currently assigned the Reader role (low privilege). A point-in-time audit classifies it as low-risk. But suppose this identity's permission profile has historically cycled through Reader, Contributor, and Owner states, and the transitions from Reader to Owner are one-way, while de-escalation from Owner back to Reader requires passing through Contributor. The identity's \emph{permission envelope} (the union of all reachable permission states) includes Owner. A snapshot audit scores it at Reader; the correct risk score is Owner.  

While such situations rarely arise for human principals (due to streamlined group membership provisioning), they are common for NHIs. No existing Cloud Security Posture Management (CSPM) tool, access review process, or compliance scan is designed to detect this. Automated reasoning approaches such as AWS Zelkova\cite{backes2018semantics} and AWS IAM Access Analyzer\cite{aws2019accessanalyzer} verify policy properties (reachability, over-permission) against a formal model of the policy language, while Microsoft Entra Permissions Management\cite{microsoft2023entra} tracks permission changes and flags anomalous modifications. These tools operate on snapshots: they see the permission graph at time~$t$, or on the policy language semantics: they verify what is currently permitted, not what permission states are dynamically reachable through observed operational transitions. The temporal dimension (how permissions evolve, which privilege levels are reachable through observed dynamics) is outside their data model.

\paragraph{}

Empirical results from a large Azure tenant (Section~\ref{sec:empirical}) confirm that this scenario is not hypothetical: we detect persistent ratchet-type privilege circuits whose structural signature recurs across 18 independent observation windows, and backtesting validates that early detection of these patterns predicts long-term stability with zero false positives.  

\paragraph{}
  
This paper develops the framework needed to detect, classify, and validate such temporal permission dynamics.  
  
\medskip  
Our central claim is \emph{structural}, not empirical: cloud IAM permission equivalence admits a natural decomposition into two independent components.  
  
\begin{enumerate}[leftmargin=*]  
\item \textbf{Structural equivalence} (fibration): Two NHIs have the  
  same permission profile right now. This is what snapshots  
  capture. Formally, they belong to the same fiber under a graph  
  fibration of the permission assignment graph.  
  
\item \textbf{Temporal equivalence} (privilege circuits): Permission states  
  mutually reachable within an SCC of the fiber transition graph share  
  the same permission envelope. An NHI assigned to any fiber in the circuit carries  
  the risk of the highest-privilege state in the envelope.  
  NHIs are passive (they do not move): it is the fiber structure  
  that evolves around them.  
\end{enumerate}  

Temporal equivalence requires tracking permission evolution over time, which in turn requires organizing the combinatorial space of fiber transitions into tractable structures. We define a lineage partition for this organization, producing stable compartments within which privilege circuit analysis becomes well-defined.  
  
\subsection{Scope and nature of contributions}  
  
This paper is a \emph{framework paper} with empirical validation and predictive testing. Its contributions are:  
  
\begin{enumerate}[leftmargin=*]  
\item \textbf{The fibration/temporal decomposition}: A formal framework  
  identifying structural and temporal permission equivalence as two  
  independent quotient constructions: one by the fibration equivalence  
  relation (spatial), one by SCC membership in the fiber transition graph (temporal).  
  
\item \textbf{Privilege circuit classification}: The fiber transition graph within  
  each lineage decomposes into strongly connected components, each  
  classifiable by edge symmetry into three governance-relevant types: monoliths (static), oscillators (reversible cycling), and ratchets (irreversible escalation).  
  
\item \textbf{The primorial invariant}: A topological fingerprint for privilege circuits based on prime-factorization encoding of elementary cycles, enabling comparison of circuit structure across observation windows and NHIs.  
  
\item \textbf{Windowed temporal methodology}: Multi-scale analysis via multi-scale decomposition, stability sweeps across window sizes, and backtesting that validates predictive power.  
  
\item \textbf{Lineage compartmentalization}: Fiber dynamics naturally  
  compartmentalize into a monotonically stabilizing lineage partition  
  that serves as persistent arenas for temporal analysis.  
  
\item \textbf{Empirical verification and prediction}: From a large  
  Azure dataset, we verify the framework at every layer and demonstrate  
  that early ratchet detection predicts long-term stability.
\end{enumerate}  
  
\newpage  
\subsection{Paper organization}  
  
\begin{itemize}  
\item Section~\ref{sec:baseline} introduces the observation model and the baseline relativity problem.  
\item Section~\ref{sec:fibration} defines structural equivalence via fibration.  
\item Section~\ref{sec:lineage} defines the lineage partition that organizes fiber transitions into stable compartments.  
\item Section~\ref{sec:circuits} introduces temporal equivalence, privilege circuits, and the three-type classification.  
\item Section~\ref{sec:invariant} defines the primorial invariant for privilege circuits.  
\item Section~\ref{sec:methodology} presents the windowed temporal methodology: multi-scale analysis, stability sweeps, and backtesting.  
\item Section~\ref{sec:empirical} presents empirical results: validated assumptions, privilege circuit detection, stability analysis, and backtesting.  
\item Section~\ref{sec:governance} states governance implications.  
\item Section~\ref{sec:discussion} discusses related work and limitations.  
\item Appendix~\ref{app:coalescent} presents the coalescent analysis of lineage mergers.  
\item Appendix~\ref{app:backtesting} provides a detailed exposition of the backtesting methodology, metrics computation, and design rationale.  
\end{itemize}  
  
\section{Observation Model and Baseline Relativity}\label{sec:baseline}  
  
\subsection{Snapshots and fibers}  
  
We observe cloud IAM state as discrete snapshots at times  
$t_0, t_1, \ldots$ At each time~$t$, we record every NHI and its  
permission assignments.  
  
\begin{definition}[Role assignment pair]\label{def:rap}  
A role assignment pair is a tuple $(r, s)$ where $r \in \mathcal{R}$ is  
a role definition identifier and $s \in \mathcal{S}$ is a scope  
abstraction representing the resource scope level (tenant, management group, \ldots) at which the role is assigned.  
\end{definition}  
  
\begin{definition}[Fiber]\label{def:fiber}  
A fiber $f$ is a finite set of role assignment pairs  
$f = \{(r_1, s_1), \ldots, (r_n, s_n)\}$ representing the complete\footnote{By complete, we mean the transitive closure of  
direct and indirect assignments.} permission profile of one or more NHIs at a specific time.  
\end{definition}  
  
\begin{definition}[Fiber identity]\label{def:fiberid}  
Two fibers are \emph{identical} if they contain the same set of role  
assignment pairs. In implementation, identity may be determined by  
hashing a canonical (lexicographically sorted) representation; the  
mathematical framework requires only set equality.  
\end{definition}  
  
At each snapshot~$t$, the NHI population partitions into fibers: sets  
of NHIs sharing identical permission profiles. We write $F(t)$ for the  
set of non-dormant fibers at time~$t$.  

\paragraph{}

Two properties of fibers are operationally significant:  
\begin{enumerate}  
\item \emph{Content-addressability}: fiber identity is determined   
  entirely by the permission profile, not by creation time or   
  assignment order. The same set of role assignment pairs observed   
  at different snapshots produces the same fiber identifier,   
  enabling cross-temporal comparison without a centralized registry.  
\item \emph{Compressibility}: many NHIs may share a single fiber,   
  reducing the effective population from the full NHI count to   
  the (typically much smaller) number of distinct permission profiles.  
\end{enumerate}  
  
\begin{example}[Running example: a four-NHI system]\label{ex:running}  
Consider four NHIs $\alpha, \beta, \gamma, \delta$ in a tenant with  
three roles (Reader, Contributor, Owner) at various scopes. At  
time~$t_0$:  
\begin{align*}  
  f_{R1} &= \{\text{Reader}\}, \quad \text{assigned to } \alpha, \beta \quad at \quad subscription \quad level\\ 
  f_C &= \{\text{Contributor}\}, \quad \text{assigned to } \gamma \quad at \quad subscription \quad level\\  
  f_{R2} &= \{\text{Reader}\}, \quad \text{assigned to } \delta \quad at \quad resource \quad group \quad level
\end{align*}  
\begin{remark} $\alpha$ and $\beta$ may not be assigned the same subscriptions, what's more they may be assigned more than one subscription. 
\end{remark}
\begin{remark}
$\delta$ cannot be merged with $\alpha$ and $\beta$ since the Reader role is not assigned at the same scope level. The three fibers form a partition  $F(t_0) = \{f_{R1}, f_C, f_{R2}\}$. 
\end{remark}
\end{example}  
  
\subsection{Baselines}  
  
A snapshot alone gives no structure: it is an unordered collection of fibers. To track evolution (which fibers merged, split, mutated), we need a reference point.  
  
\begin{definition}[Baseline]\label{def:baseline}  
A baseline $b$ at time $t_0$ is the set of all assigned NHIs present at  
snapshot~$t_0$:  
$  
  b = \{\text{all NHIs with at least one permission assignment at } t_0\}  
$  
The baseline fibers $F(b)$ are those fibers containing at least one NHI  
from~$b$.  
\end{definition}  
  
From $t_0$ forward, we track which fibers contain baseline NHIs, how they merge, split, or go dormant. Fibers containing no baseline NHIs are \emph{orphans} (they exist in the environment but are not anchored to our reference point).  
  
\subsection{The relativity problem}\label{sec:relativity}  
  
Baseline choice is arbitrary. Starting observation on January~1 versus January~22 yields different baseline sets, different orphan populations, different lineage histories. Every derived quantity (lineage partition, orphan count, convergence trajectory) is relative to the chosen baseline.  
  
We offer two responses, one structural and one empirical.  
  
\medskip  
\begin{itemize}
\item The merge events themselves are objective facts, independent of baseline choice. Two fibers either merged or they did not. The baseline determines only which fibers we label ``tracked'' versus ``orphan.'' The underlying merge graph is baseline-invariant by construction.  
\item We conjectured that orphan populations decay at a rate sufficient to make the lineage partition approximately baseline-invariant after a settling period. Results (Section~\ref{sec:assumptions}) confirm this: 99.2\% of non-baseline NHIs converge to baseline fibers, with 98.8\% converging at their very first snapshot. The unconverged residual remains stable at 0.8\%.  
\end{itemize}
  
\section{Structural Equivalence via Fibration}\label{sec:fibration}  
  
At any fixed time~$t$, the permission assignment graph admits a fibration in the sense of Boldi and Vigna~\cite{boldi2002fibrations}: a structure-preserving surjection from the full IAM graph to a compressed base graph. We summarize the relevant definitions.  
  
\subsection{Graph fibration}  
  
\begin{definition}[Graph fibration]\label{def:graphfib}  
Let $G = (V_G, E_G)$ and $B = (V_B, E_B)$ be directed graphs. A  
surjective morphism $\varphi\colon G \to B$ is a \emph{fibration} if  
for every node $v \in V_G$ and every edge $e' \in E_B$ with  
$\mathrm{target}(e') = \varphi(v)$, there exists a unique edge  
$e \in E_G$ with $\mathrm{target}(e) = v$ and $\varphi(e) = e'$  
(the \emph{unique lifting property}).  
\end{definition}  
  
The fibers $\varphi^{-1}(b)$ for $b \in V_B$ partition $V_G$ into  
equivalence classes of nodes with isomorphic input trees. In the IAM context, NHIs in the same fiber receive identical incoming permission assignments.  
  
\subsection{The principals fibration}  
  
Applied to the cloud IAM permission graph, fibration yields the  
\emph{principals fibration}: NHIs are grouped by their complete  
permission profile (set of role assignment pairs at a given scope  
abstraction depth). Two NHIs belong to the same fiber if and only if  
they hold exactly the same roles at structurally equivalent scopes.  
  
\begin{definition}[Structural equivalence]\label{def:structequiv}  
Two NHIs $x, y$ are \emph{structurally equivalent} at time~$t$,  
written $x \sim_{\mathrm{fib}} y$, if they belong to the same fiber  
under the principals fibration at~$t$:  
$  
  x \sim_{\mathrm{fib}} y \iff f_x(t) = f_y(t)  
$  
where $f_x(t)$ denotes the fiber containing NHI~$x$ at time~$t$.  
\end{definition}  

\begin{proposition}[WAR invariance under fibration]\label{prop:war-fiber}  
Let $\mathrm{WAR}\colon \mathcal{F} \to \mathbb{R}_{\geq 0}$ denote the  
weighted authorization risk norm as defined in~\cite{parisel2025scoring} computed from  
a fiber's set of role assignment pairs. If two NHIs $x, y$ belong to  
the same fiber at time~$t$, then  
$  
  \mathrm{WAR}(f_x(t)) = \mathrm{WAR}(f_y(t)).  
$  
Equivalently, WAR factors through the fibration: there exists a  
unique map $\overline{\mathrm{WAR}}\colon V_B \to \mathbb{R}_{\geq 0}$  
on the base graph such that  
$  
  \mathrm{WAR}(v) = \overline{\mathrm{WAR}}(\varphi(v))  
  \qquad \text{for all } v \in V_G.  
$  
In particular, the WAR norm is a fiber invariant: it is well-defined  
on fibers, not merely on individual NHIs.  
\end{proposition}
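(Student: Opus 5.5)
The argument is essentially definitional: WAR is a function of the fiber's set of role assignment pairs, and fibers are determined by exactly that set. I would proceed in three short steps.

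\textbf{Step 1 (pointwise equality).} Suppose $x \sim_{\mathrm{fib}} y$ at time~$t$. By Definition~\ref{def:structequiv}, $f_x(t) = f_y(t)$ as fibers. By Definition~\ref{def:fiberid}, this means they are equal as sets of role assignment pairs. As cited from~\cite{parisel2025scoring}, WAR is computed only from that set, so it is a function $\mathcal{F} \to \mathbb{R}_{\geq 0}$ on sets of pairs. Applying a function to equal arguments gives $\mathrm{WAR}(f_x(t)) = \mathrm{WAR}(f_y(t))$. I would stress that only set equality is used, not the hashing implementation.

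\textbf{Step 2 (factorization).} Fix the principals fibration $\varphi\colon G \to B$ at time~$t$. For an NHI node $v \in V_G$, write $\mathrm{WAR}(v) := \mathrm{WAR}(f_v(t))$. Define $\overline{\mathrm{WAR}}(b) := \mathrm{WAR}(v)$ for any $v \in \varphi^{-1}(b)$.

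\begin{itemize}
\item \emph{Existence of a representative:} since $\varphi$ is surjective, $\varphi^{-1}(b) \neq \emptyset$, so a choice of $v$ is always available.
\item \emph{Well-definedness:} if $v, w \in \varphi^{-1}(b)$, then $v$ and $w$ lie in the same fiber of $\varphi$. The principals fibration is constructed so that fibers of $\varphi$ are exactly the classes of equal permission profiles, so $f_v(t) = f_w(t)$, and Step~1 gives equal WAR values.
\item \emph{Factorization:} $\mathrm{WAR} = \overline{\mathrm{WAR}} \circ \varphi$ holds by construction.
\item \emph{Uniqueness:} if $g \circ \varphi = \mathrm{WAR}$ as well, then for every $b$, picking any preimage $v$ gives $g(b) = \mathrm{WAR}(v) = \overline{\mathrm{WAR}}(b)$. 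Surjectivity of $\varphi$ again supplies the preimage.
\end{itemize}

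\textbf{Main obstacle.} The only nontrivial step is the identification of graph-theoretic fibers $\varphi^{-1}(b)$ with permission-profile classes. The unique lifting property gives isomorphic input trees, i.e., identical incoming assignment structure. I would argue that the incoming edges to an NHI node are exactly its role assignments, labeled by role and scope abstraction, with transitive closure included as in the footnote to Definition~\ref{def:fiber}. Under that reading, isomorphic input trees yield the same set of $(r,s)$ pairs.

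If the graph carries extra structure not reflected in the pairs, I would not try to derive the identification from lifting alone. Instead I would take it as part of the definition of the principals fibration, as the paper states ("if and only if they hold exactly the same roles"). This one-directional implication (same $\varphi$-fiber implies same profile) is all the proof needs. The remaining statement, that WAR is a fiber invariant, then follows immediately.
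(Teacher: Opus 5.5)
Your proposal is correct and follows the paper's proof. WAR values agree because same-fiber NHIs share the same set of role assignment pairs (Definition~\ref{def:fiber}) and WAR depends only on that set; the factorization then comes from the quotient's universal property, setting $\overline{\mathrm{WAR}}(b) = \mathrm{WAR}(v)$ for any $v \in \varphi^{-1}(b)$. Your uniqueness argument via surjectivity of $\varphi$, and your identification of $\varphi$-fibers with permission-profile classes as part of the definition of the principals fibration, are details the paper leaves implicit, but they do not change the route.
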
  
  
\begin{proof}  
By Definition~\ref{def:fiber}, all NHIs in a fiber share the same set  
of role assignment pairs. The WAR norm is computed entirely from this  
set (it depends on role definitions and scope abstractions, not on NHI  
identity). Therefore NHIs in the same fiber produce identical WAR  
values. The factorization through the base graph follows from the  
universal property of the quotient: since WAR is constant on each  
fiber $\varphi^{-1}(b)$, the assignment  
$\overline{\mathrm{WAR}}(b) = \mathrm{WAR}(v)$ for any  
$v \in \varphi^{-1}(b)$ is well-defined.  
\end{proof}  

Proposition~\ref{prop:war-fiber} ensures that the privilege gradient  
between fibers in the transition graph (Definition~\ref{def:primorial})  
is well-defined: it depends only on the fiber states, not on which NHI  
is tracked. It also ensures that the permission envelope risk score  
$\mathrm{risk}(C) = \max_{g \in C} \mathrm{WAR}(g)$  
(Section~\ref{sec:envelope}) is unambiguous.
  
\subsection{Limitations of structural equivalence}  
  
Structural equivalence is a static notion. It says nothing about:  
\begin{itemize}  
\item Whether two NHIs in different fibers at time~$t$ might occupy the  
  same fiber at time~$t+k$ (convergent trajectories)  
\item Whether two NHIs in the same fiber at time~$t$ will remain  
  together (divergent trajectories)  
\end{itemize}  
  
To capture these phenomena, we need to track fiber evolution.  
  
\section{The Lineage Partition}\label{sec:lineage}  
  
Fibers are not static. NHIs change permissions (mutations), groups of  
NHIs with different profiles converge to the same profile (merges),  
single fibers fragment (splits), new NHIs appear (creations), and  
inactive NHIs resurface (wake-ups). Tracking these transitions at the  
level of individual fibers is combinatorially complex. This section  
defines the lineage partition that reduces this complexity.  
  
\subsection{Fiber transitions}\label{sec:transitions}  
  
\begin{definition}[Fiber transition types]\label{def:transitions}  
For a fiber $f_t$ at time~$t$ with parent set  
$\mathrm{Parents}(f_t) \subseteq F(t-1)$:  
\begin{enumerate}  
\item \textbf{Continuation} ($1 \to 1$, unchanged):  
  $|\mathrm{Parents}| = 1$, $f_t = \mathrm{parent}$ (as sets)  
\item \textbf{Mutation} ($1 \to 1$, changed):  
  $|\mathrm{Parents}| = 1$, $f_t \neq \mathrm{parent}$  
\item \textbf{Merge} ($n \to 1$):  
  $|\mathrm{Parents}| > 1$  
\item \textbf{Split} ($1 \to n$):  
  $|\mathrm{Children}(f_{t-1})| > 1$  
\item \textbf{Creation} ($0 \to 1$):  
  $|\mathrm{Parents}| = 0$, NHIs newly created  
\item \textbf{Wake-up} ($0 \to 1$, reactivation):  
  $|\mathrm{Parents}| = 0$, NHIs existed earlier (seed bank)  
\item \textbf{Dormancy} ($1 \to 0$):  
  Fiber has no children at next snapshot  
\end{enumerate}  
\end{definition}  
  
\begin{example}[Running example, continued]\label{ex:transitions}  
Suppose at $t_1$: NHI~$\gamma$ (previously in $f_C$) is reassigned to  
the same permissions and same subscription level scope as $\alpha$ and $\beta$, so all three now belong  
to $f_{R1}$. Meanwhile $\delta$ retains Reader at resource group scope. This is a merge:  
$f_{R1}$ and $f_C$ merge into $f_{R1}$. At $t_2$: NHI~$\alpha$ is promoted to Owner, splitting out of $f_{R1}$  
into a new fiber equal to $f_O$. This is a split of the fiber  
containing $\{\alpha, \beta, \gamma\}$.  
\end{example}  
  
\subsection{Lineage construction: merges only}\label{sec:lineage-construction}  
  
\begin{definition}[Lineage partition]\label{def:lineage}  
Given baseline~$b$ at time~$t_0$, the lineage partition  
$L(F(b), t)$ is a partition of the baseline fiber index set. Initially:  
$  
  L(F(b), 0) = \bigl\{\{1\}, \{2\}, \ldots, \{|F(b)|\}\bigr\}  
  \quad\text{(discrete partition)}  
$  
When fibers from distinct blocks merge at time~$t$, their blocks unite.  
Blocks only merge, never split.  
\end{definition}  
  
When a fiber splits into children $f_1, \ldots, f_n$, all children inherit the parent's lineage block. No branching occurs in lineage space. This asymmetry is deliberate: splits create exponential branching in fiber space but are absorbed silently in lineage space, while merges reduce the lineage partition toward coarser structure.  
  
\subsection{Properties of the lineage partition}\label{sec:lineage-properties}  

Without dormancy effects, the lineage partition exhibits two important properties: convergence and unicity of equilibrium.
  
\begin{theorem}[Convergence]\label{thm:convergence}  
$|L(F(b), t)|$ is convergent in~$t$ towards a limit $|L(F(b))|$.  
\end{theorem}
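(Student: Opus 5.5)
The plan is to prove convergence by a monotone-bounded-integer argument. By Definition~\ref{def:lineage}, $L(F(b),t)$ is a partition of the finite index set $\{1,\ldots,|F(b)|\}$. The only operation ever applied to it is uniting blocks at merge events; splits, mutations, continuations, creations and (under the standing assumption of no dormancy effects) dormancy do not alter the partition. So the first step is to record that for every $t$ the partition $L(F(b),t+1)$ is either equal to $L(F(b),t)$ or a strict coarsening of it. In the strict case, at least two blocks are united into one.

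The second step turns this into a statement about the integer sequence $n_t := |L(F(b),t)|$. Uniting $k\geq 2$ blocks into one reduces the block count by $k-1\geq 1$. Hence $n_{t+1}\leq n_t$ for all $t$, with strict inequality exactly at times when fibers from distinct blocks merge. In addition, $1\leq n_t\leq |F(b)|$ for all $t$. The lower bound needs $F(b)\neq\emptyset$; if $F(b)=\emptyset$ the partition is empty and the claim is trivial.

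The third step concludes. A non-increasing sequence of integers bounded below is eventually constant: it can strictly decrease at most $|F(b)|-1$ times, because each decrease lowers it by at least one. So there is a $T$ with $n_t=n_T$ for all $t\geq T$. The limit $|L(F(b))| := n_T$ therefore exists, and convergence is in fact finite-time stabilization. The same argument shows the partition itself stabilizes, not only its cardinality. Partitions of a finite set ordered by refinement have finite chain length, so once $n_t$ is constant no further merge across blocks can occur. Under the merges-only rule, the partition is thus constant from $T$ on.

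The main obstacle is not the counting but justifying the monotonicity claim rigorously from the informal construction. Splits must be shown never to increase $n_t$: children inherit the parent's block, so no new block is created. Non-baseline fibers (orphans, creations) must not introduce new indices, because the index set is fixed to $F(b)$. The step I would check most carefully is a merge involving an orphan fiber and a baseline fiber. I would treat this as leaving the partition unchanged, or as attaching the merged fiber to the baseline block, so the index set stays fixed. I would also make explicit that the no-dormancy hypothesis is what excludes blocks vanishing or re-emerging through wake-ups.
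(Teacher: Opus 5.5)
Your proposal is correct and matches the paper's proof: blocks only unite and are never split or created, so $|L(F(b),t)|$ is a non-increasing sequence of positive integers and hence convergent (indeed eventually constant). Your extra step, that the partition itself stabilizes once the count does, is the content of the paper's separate Equilibrium theorem. Your refinement argument (any strict coarsening strictly lowers the block count) justifies that step more cleanly than the paper's appeal to ancestry.
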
  
\begin{proof}  
Each cross-lineage merge strictly reduces the block count by at least  
one. No operation increases block count (blocks only merge, never split, and there are no new block creations). Since $|L| \in \mathbb{N}$ and $|L| \geq 1$, the sequence is bounded below and non-increasing, hence convergent.  
\end{proof}  
  
\begin{theorem}[Equilibrium]\label{thm:equilibrium}  
There exists a time $e$ such that $L(F(b), t) = L(F(b), e)$ for all $t \geq e$.  
\end{theorem}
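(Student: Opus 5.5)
The plan is to upgrade Theorem~\ref{thm:convergence} from convergence of the block count $|L(F(b),t)|$ to eventual constancy of the partition itself. The key observation is that the lineage partition evolves only by coarsening. By Definition~\ref{def:lineage}, blocks only unite and never split. The block set is also fixed: splits are absorbed into the parent's block and no new blocks are created. So for $t \le t'$ the partition $L(F(b),t')$ is coarser than or equal to $L(F(b),t)$ in the refinement order on partitions of the finite index set $\{1,\ldots,|F(b)|\}$.

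First, I would note that $|L(F(b),t)|$ is a non-increasing sequence of positive integers (the argument of Theorem~\ref{thm:convergence}). A convergent sequence of integers is eventually constant, so there is a time $e$ with $|L(F(b),t)| = |L(F(b),e)|$ for all $t \ge e$. Second, I would show that in the coarsening regime, equal block counts force equal partitions. Suppose $L(F(b),t)$ is coarser than or equal to $L(F(b),e)$ for $t \ge e$. Then every block of $L(F(b),t)$ is a union of blocks of $L(F(b),e)$. If any block of $L(F(b),t)$ were the union of two or more blocks, the block count would strictly drop, contradicting $|L(F(b),t)| = |L(F(b),e)|$. Hence each block of $L(F(b),t)$ is exactly one block of $L(F(b),e)$, and the two partitions coincide. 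Equivalently, any change after $e$ would be a cross-lineage merge, which by the proof of Theorem~\ref{thm:convergence} strictly reduces the count, and no such merge can occur.

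Alternatively, the whole argument can be phrased as a single finiteness fact. The set of partitions of a finite set is finite, and a chain that is monotone under refinement and strictly coarsens at each change has length at most $|F(b)|-1$. So only finitely many change times occur, and $e$ can be taken as the last of them.

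The main obstacle is making rigorous that the update rule is purely coarsening, i.e.\ that the index set and block structure are never altered by splits, creations, wake-ups, or dormancy. This is where the standing hypothesis ``without dormancy effects'' and the convention that split children inherit their parent's block are used. Creations and wake-ups must be checked to add no new blocks, since orphans are not part of the baseline index set. Once this monotonicity is stated explicitly, the rest is the integer-sequence argument above.
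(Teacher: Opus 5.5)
Your proposal is correct and follows the same route as the paper: the non-increasing block count from Theorem~\ref{thm:convergence} stabilizes, and the partition only ever coarsens (the paper's ``no operation reverses a merge''). Your explicit step, that a coarsening with unchanged block count must leave the partition unchanged, is exactly what the paper's proof leaves implicit, so your version is the more complete of the two.
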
  
\begin{proof}  
By Theorem~\ref{thm:convergence}, $|L(F(b), t)|$ is a non-increasing  
sequence in $\mathbb{N}$ bounded below by~1, hence it stabilizes at  
some value $|L(F(b), e)|$. The partition itself stabilizes because  
lineage membership is determined by ancestry: once two fibers share a  
common ancestor via a merge, they remain in the same block. No  
operation reverses a merge.  
\end{proof}  
  
Due to dormancy, not all blocks of a lineage partition are observable at a given time $t$.  
The stabilized partition is no exception: some blocks may disappear or reappear indefinitely, even when $t > e$.  
Such dormancy effects do not change the structure or unicity of the stabilized partition, yet they change its exact composition at a given time.
  
\section{Privilege Circuits: Temporal Equivalence via SCC Analysis}\label{sec:circuits}  
  
The crucial property of the stabilized partition is that, due to equilibrium, it serves as a persistent ground for analyzing the dynamics within each compartment. This is where the second notion of equivalence emerges.  
  
\subsection{Fiber transition graph}\label{sec:transition-graph}  
  
Within a composite lineage, fibers evolve between snapshots. We formalize this as a directed graph on the set of observed fiber states.  
  
\begin{definition}[Fiber transition graph]\label{def:transition-graph}  
Let $\mathcal{F}_L$ denote the set of all distinct fiber states  
observed within lineage~$L$ across a contiguous window of snapshots. The \emph{fiber transition graph}  
$(\mathcal{F}_L, \to_L)$ has an edge  
$f \to_L g$ whenever, at some snapshot boundary  
$t \to t+1$ within the window, at least one NHI in the lineage transitioned from  
fiber state~$f$ to fiber state~$g$.  
\end{definition}  
  
\begin{remark}[Why a graph, not a function]\label{rem:relation}  
One might attempt to define a fiber transition \emph{function}  
$\sigma\colon F_L(t) \to F_L(t+1)$. However, merges and  
splits make this ill-defined: a merge maps multiple fibers to one  
(not injective), and a split maps one fiber to multiple successors  
(not a function). The transition graph is well-defined regardless of  
merges, splits, or non-determinism.  
\end{remark}  
  
\begin{example}[Running example, continued]\label{ex:relation}  
In the lineage $\{f_R, f_C\}$ from Example~\ref{ex:transitions}, the  
transitions observed so far give edges  
$f_C \to f_R$ (from the merge at $t_1$) and $f_R \to f_O$  
(from the split at $t_2$). Suppose that at $t_3$, NHI~$\alpha$  
returns to Reader: $f_O \to f_R$. The transition graph on  
$\{f_R, f_C, f_O\}$ now contains edges  
$f_C \to f_R$, $f_R \to f_O$, and $f_O \to f_R$.  
The set $\{f_R, f_O\}$ forms a strongly connected component.  
\end{example}  
  
\subsection{Strongly connected components and privilege circuits}\label{sec:sccs}  
  
\begin{definition}[Privilege circuit]\label{def:circuit}  
A \emph{privilege circuit} $C = \{f_1, f_2, \ldots, f_k\} \subseteq \mathcal{F}_L$  
is a maximal strongly connected component of the transition graph  
$(\mathcal{F}_L, \to_L)$: for every pair $f_i, f_j \in C$, there  
exists a directed path from $f_i$ to $f_j$ via edges in $\to_L$.  
\end{definition}  
  
Privilege circuits are computable in linear time via Tarjan's algorithm~\cite{tarjan1972depth}. The SCC decomposition partitions the fiber state space into equivalence classes of mutual reachability.  
  
\begin{remark}[Fibration and privilege circuits]\label{rem:nofibration}  
Within a privilege circuit, the input tree unfolding (used by fibration to identify structurally equivalent nodes) produces an infinite periodic tree: fibration would collapse all nodes in the circuit to a single vertex, erasing the temporal structure. Privilege circuit analysis captures exactly the structure that fibration erases.  
\end{remark}  
  
\subsection{Three-type classification}\label{sec:classification}  
  
The transition graph structure of a privilege circuit determines its governance properties. We define a classification based on edge symmetry and cardinality.  
  
\begin{definition}[Circuit types]\label{def:types}  
Let $C$ be a privilege circuit with vertex set $V(C)$ and edge set $E(C)$.  
\begin{enumerate}  
\item \textbf{Monolith} (type $M$): $|V(C)| = 1$. The NHI remained in a single fiber throughout the observation window. No temporal dynamics.  
  
\item \textbf{Oscillator} (type $O$): $|V(C)| > 1$ and every edge has a reverse:  
$f \to g \in E(C) \implies g \to f \in E(C)$.  
All transitions are bidirectional. The privilege range is bounded and all escalation paths are reversible.  
\item \textbf{Ratchet} (type $R$): $|V(C)| > 1$ and at least one edge  
  $f \to g \in E(C)$ has no reverse $g \to f \in E(C)$.  
  Since $C$ is strongly connected, every state remains reachable   
  from every other via some directed path; however, the absence   
  of a direct reverse edge means that undoing the transition   
  $f \to g$ requires traversing intermediate states, potentially   
  at higher privilege levels. It is this edge-level asymmetry,   
  not global unreachability, that defines the ratchet property.    
\end{enumerate}  
\end{definition}  
  
\begin{proposition}[Algebraic structure of circuit types]\label{prop:algebra}  
Let $\mathcal{P}(C)$ denote the path monoid of a privilege circuit $C$ (all directed paths under concatenation).  
\begin{enumerate}  
\item If $C$ is an oscillator, then $\mathcal{P}(C)$ is generated by invertible elements and induces a group action on $V(C)$.  
\item If $C$ is a ratchet, then $\mathcal{P}(C)$ contains non-invertible elements and induces only a monoid action on $V(C)$.  
\end{enumerate}  
\end{proposition}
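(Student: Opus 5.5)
The statement is not precise enough as written, so the first step is to fix a formal reading and then verify each item under it. In the free path category of $C$, no path of positive length is invertible, because concatenation only increases length. Both items therefore have to be read relative to a quotient.

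\textbf{Conventions.} I would make the following choices:
\begin{itemize}
\item Treat $\mathcal{P}(C)$ as a category, or as a monoid with a zero if one adjoins an absorbing ``undefined'' element for non-composable pairs.
\item Let each edge $e\colon f\to g$ act on $V(C)$ as the partial map $\rho(e)$ sending $f\mapsto g$, and extend $\rho$ multiplicatively to paths.
\item Impose the relation $e\,\bar e = 1_f$ and $\bar e\, e = 1_g$ exactly when $e$ has a reverse edge $\bar e\colon g\to f$ in $E(C)$.
\item Call a generator \emph{invertible} if this relation applies to it.
\end{itemize}
With these conventions, item~1 becomes a statement about the groupoid presented by the symmetric edge set. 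Item~2 becomes the failure of that presentation for at least one generator.

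\textbf{Step 1 (oscillator).} By Definition~\ref{def:types}, every $e\colon f\to g$ has a reverse $\bar e$. Hence every generator is invertible in the quotient, and $\mathcal{P}(C)/\!\sim$ is the free groupoid on the underlying undirected graph. Each $\rho(e)$ is a partial bijection with inverse $\rho(\bar e)$, so the generated structure lies in the symmetric inverse monoid and in fact inside a groupoid acting on $V(C)$. Fixing a base vertex $f_0$, the vertex group $\pi(C,f_0)$ consists of closed walks at $f_0$ modulo backtracking. By strong connectivity (Definition~\ref{def:circuit}), conjugating by a spanning-tree path to any vertex yields a transitive groupoid action on $V(C)$. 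This groupoid action is what I would identify with ``group action on $V(C)$.'' One can also pass to a genuine group action via the permutation group generated by extending each $\rho(e)\cup\rho(\bar e)$ to a transposition-like bijection.

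\textbf{Step 2 (ratchet).} Take an edge $e\colon f\to g$ with no reverse in $E(C)$. Suppose $p$ were an inverse of $e$ in the quotient. Then $p$ would be a path $g\to f$ with $e\,p\sim 1_f$. The only relations available cancel an edge against its own reverse. Since $e$ has no reverse, the letter $e$ can never be cancelled, so $e\,p$ has positive reduced length and is not an identity. This is a normal-form argument: the rewriting $e\bar e\to 1$ is confluent and length-decreasing, so reduced words are unique. Hence $e$ is non-invertible, and the induced action of $\rho(e)$ is not undone by any element, giving only a monoid action. Strong connectivity still supplies paths $g\to f$, which explains why reachability holds even though invertibility fails; this matches the remark in Definition~\ref{def:types}.

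\textbf{Main obstacle.} The hard part is the foundational choice rather than any calculation. I would state the conventions above explicitly at the start of the proof and then prove uniqueness of reduced words for Step~2, since the whole ratchet claim rests on that normal form.
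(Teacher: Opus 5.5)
Your route differs from the paper's, and it is sounder. The paper stays in the free path monoid. Part~1 declares each edge invertible because $\bar e$ returns the endpoint to the source vertex, and Part~2 uses the edge-length homomorphism $\ell\colon\mathcal{P}(C)\to(\mathbb{N},+)$.

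As you observe, $\ell$ rules out invertibility of every positive-length path, oscillator edges included. It also never uses the hypothesis $g\to f\notin E(C)$. Conversely, Part~1's reasoning would make a ratchet edge ``invertible'' too, since strong connectivity supplies a return path $g\to f$.

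Your quotient by $e\bar e=1_f$, $\bar e\,e=1_g$ on reverse pairs is what lets the hypotheses do work. Item~1 becomes the free groupoid, and item~2 becomes an uncancellable-letter statement. You do not need confluence for item~2. No defining relation involves $e$, so counting occurrences of $e$ is a well-defined homomorphism $\#_e\colon\mathcal{P}(C)/\!\sim\;\to(\mathbb{N},+)$, and $\#_e(e\,p)\geq 1\neq 0=\#_e(1_f)$. This is the paper's $\ell$ restricted to the non-reversible edge, which is exactly the repair its Part~2 needs.

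The price of your reading is that ``group action'' becomes ``groupoid action'', and invertibility becomes relative to a chosen presentation. Commit to the categorical version. The ``monoid with a zero'' option cannot give item~1, since a nontrivial monoid with a zero is never a group. (Minor point: a continuation self-loop $f\to f$ is its own reverse, so your relation makes it an involution and the quotient is not literally free.)

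There is one wrong step. In Step~2 you conclude that ``the induced action of $\rho(e)$ is not undone by any element.'' This is false. Strong connectivity gives a path $p\colon g\to f$, and $\rho(e\,p)=\{f\mapsto f\}=\rho(1_f)$.

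More generally, for any strongly connected $C$ the endpoint action realizes every pair $(v,w)\in V(C)^2$. So an oscillator and a ratchet on the same vertex set induce the same partial-bijection action, and nothing visible on $V(C)$ separates them. What you have actually shown is that $\mathcal{P}(C)/\!\sim$ is not a groupoid. You have also shown that, for a ratchet, its action on $V(C)$ is not faithful: $e\,p\neq 1_f$, yet both act as $\{f\mapsto f\}$. ``Only a monoid action'' has to be stated in that sense. As written, your sentence is the mirror image of the paper's Part~1 conflation of invertibility on $V(C)$ with invertibility in $\mathcal{P}(C)$.

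For the same reason, drop the ``transposition-like'' extension in Step~1. Sending a non-reversible $e\colon f\to g$ to $(f\,g)$ produces a genuine group action for a ratchet as well. That construction therefore undercuts item~2 rather than supporting item~1.
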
  

\begin{proof}  
We prove each part.  
  
\medskip  
\noindent\textbf{Part 1 (Oscillator $\Rightarrow$ group action).}  
Let $C$ be an oscillator. By Definition~\ref{def:types}, every edge  
$f \to g \in E(C)$ has a reverse $g \to f \in E(C)$. The path monoid  
$\mathcal{P}(C)$ is generated by single-edge paths. For each generator  
$e = (f \to g)$, the reverse edge $\bar{e} = (g \to f)$ satisfies  
$e \cdot \bar{e} = (f \to g \to f)$ and $\bar{e} \cdot e = (g \to f \to g)$.  
Since $C$ is strongly connected with $|V(C)| > 1$, these compositions  
act as the identity on the source vertex: the path $e \cdot \bar{e}$  
starts and ends at~$f$, and the path $\bar{e} \cdot e$ starts and ends  
at~$g$.  
  
More precisely, consider the action of $\mathcal{P}(C)$ on $V(C)$  
defined by $p \cdot v = \mathrm{endpoint}(p \text{ starting from } v)$  
when $p$ is traversable from~$v$, and undefined otherwise. For any  
generator $e = (f \to g)$, the reverse $\bar{e}$ satisfies  
$\bar{e} \cdot (e \cdot v) = v$ for every $v$ from which $e$ is  
traversable, and symmetrically $e \cdot (\bar{e} \cdot v) = v$.  
Thus every generator acts invertibly on $V(C)$. Since the generators  
are invertible and $\mathcal{P}(C)$ is generated by them under  
concatenation, every element of $\mathcal{P}(C)$ is a product of  
invertible elements, hence invertible. The monoid $\mathcal{P}(C)$  
is therefore a group (every element has a two-sided inverse), and  
its action on $V(C)$ is a group action.  
  
\medskip  
\noindent\textbf{Part 2 (Ratchet $\Rightarrow$ non-invertible elements).}  
Let $C$ be a ratchet with edge $e = (f \to g) \in E(C)$ such that  
$g \to f \notin E(C)$. Define the edge-length homomorphism  
$\ell\colon \mathcal{P}(C) \to (\mathbb{N}, +)$ by  
$\ell(p) = \text{number of edges in } p$. This is a monoid  
homomorphism: $\ell(p \cdot q) = \ell(p) + \ell(q)$.  
  
Suppose for contradiction that $e$ has an inverse  
$e^{-1} \in \mathcal{P}(C)$, so that $e \cdot e^{-1} = \mathrm{id}$  
(the empty path, with $\ell(\mathrm{id}) = 0$). Then:  
$  
  0 = \ell(\mathrm{id}) = \ell(e \cdot e^{-1})  
    = \ell(e) + \ell(e^{-1}) = 1 + \ell(e^{-1})  
$  
which gives $\ell(e^{-1}) = -1 \notin \mathbb{N}$, a contradiction.  
Therefore $e$ is non-invertible, $\mathcal{P}(C)$ is not a group,  
and its action on $V(C)$ is a monoid action only.  
\end{proof}  
  
The classification yields a governance hierarchy:  
  
\begin{center}  
\begin{tabular}{@{}llp{8cm}@{}}  
\toprule  
\textbf{Type} & \textbf{Structure} & \textbf{Governance implication} \\  
\midrule  
Monolith ($M$) & Single fiber & Static; snapshot audit sufficient \\  
Oscillator ($O$) & All edges reversible & Bounded cycling; permission envelope is the risk measure \\  
Ratchet ($R$) & Some edges irreversible & Directed escalation; most dangerous; de-escalation may require transient re-escalation \\  
\bottomrule  
\end{tabular}  
\end{center}  
  
\subsection{Temporal equivalence}\label{sec:tempequiv}  
  
\begin{definition}[Temporal equivalence]\label{def:tempequiv}  
Two fiber states $f, g$ are \emph{temporally equivalent}, written  
$f \sim_{\mathrm{temp}} g$, if they belong to the same privilege circuit:  
$  
  f \sim_{\mathrm{temp}} g \iff f, g \in C  
  \text{ for some privilege circuit } C \text{ of } (\mathcal{F}_L, \to_L)  
$  
Mutual reachability defines the equivalence. This is an equivalence relation  
because SCC membership partitions the state space.  
\end{definition}  
  
\subsection{The fibration/circuit decomposition}\label{sec:decomposition}  
  
We can now state the central structural observation.  
  
\begin{definition}[Fibration/circuit decomposition]\label{def:decomposition}  
Cloud IAM permission equivalence decomposes into two independent quotient constructions:  
  
\begin{center}  
\begin{tabular}{@{}llll@{}}  
\toprule  
& \textbf{Structure} & \textbf{Quotient} & \textbf{Captures} \\  
\midrule  
Structural & Fibration & Fiber & Equivalent permissions \emph{at time~$t$} \\  
Temporal & Transition graph & Privilege circuit & Equivalent permissions \emph{across time} \\  
\bottomrule  
\end{tabular}  
\end{center}  
  
These two equivalences are independent:  
\begin{enumerate}  
\item Two fibers in \emph{different} states at time~$t$ (structurally  
  inequivalent) may belong to the \emph{same} privilege circuit  
  (temporally equivalent): they are at different phases of the same  
  cycle.  
  
\item Two NHIs in the \emph{same} fiber at time~$t$ (structurally  
  equivalent) may be assigned to fibers with \emph{different}  
  temporal futures (temporally inequivalent): one NHI's fiber  
  participates in a privilege circuit, the other does not.  
\end{enumerate}  
\end{definition}  
  
\subsection{The permission envelope}\label{sec:envelope}  
  
\begin{definition}[Permission envelope]\label{def:envelope}  
The \emph{permission envelope} of fiber state~$f$ in privilege circuit~$C$ is  
the union of all permission states in the circuit:  
$  
  \mathcal{E}(f) = \bigcup_{g \in C} g  
$  
By definition, all fiber states in the same circuit share the same  
envelope: $\mathcal{E}(f) = \mathcal{E}(g)$ for all $f, g \in C$.  
\end{definition}  
  
For risk assessment, the relevant score for any NHI in circuit~$C$ is:  
$  
  \mathrm{risk}(C) = \max_{g \in C} \mathrm{privilege}(g)  
$  
An NHI currently assigned to a Reader fiber whose circuit includes an Owner fiber carries Owner-level risk.  
  
\begin{example}[Running example, continued]\label{ex:envelope}  
For the privilege circuit $\{f_R, f_O\}$:  
$  
  \mathcal{E}(f_R) = \mathcal{E}(f_O) = \{\text{Reader, Owner}\}  
$  
Both fiber states carry Owner-level risk regardless of current  
assignment.  
\end{example}  
  
\begin{remark}[Oscillator versus ratchet envelopes]  
In an oscillator circuit, for every forward path from~$f$ to high-privilege  
state~$g$, the reverse path exists step by step. Privilege dynamics  
are bounded and path-reversible.  
In a ratchet circuit, some escalation paths cannot be reversed without  
passing through different intermediate states, possibly at higher  
privilege levels. De-escalation may require transient re-escalation, making ratchets the most governance-critical type.  
\end{remark}  
  
\subsection{Pure continuation convection}\label{sec:convection}  
  
One form of fiber dynamics is particularly subtle.  
  
\begin{definition}[Pure continuation convection]\label{def:convection}  
Within a lineage at consecutive snapshots $t$ and $t+1$: fibers  
$\{f_1, \ldots, f_n\}$ persist with identical permission profiles  
($f_i(t) = f_i(t+1)$ as sets), but their NHI populations exchange:  
$  
  \mathrm{NHIs}(f_i, t+1) \subset  
  \bigcup_{j} \mathrm{NHIs}(f_j, t)  
$  
No mutations are detected (fiber identities unchanged). No parent-child  
relationships are created (continuations only). NHI reassignment  
across fibers is invisible to analysis based solely on fiber identity  
transitions.  
\end{definition}  
  
Pure continuation convection is structurally silent. It creates no  
events in any log based on permission changes, yet it can reassign  
NHIs across dramatically different privilege levels. Detecting it  
requires tracking NHI-to-fiber assignments across snapshots, not  
just fiber identity transitions.  
  
\begin{remark}[NHIs are passive]  
NHIs do not move through the transition graph. NHIs are disconnected  
identities assigned to fibers by the IAM system. When a fiber  
transitions from state~$f$ to state~$g$, the NHIs assigned to~$f$  
find themselves in a new permission state, but it is the fiber  
structure that changed, not the identity.  
\end{remark}  

\section{The Primorial Invariant}\label{sec:invariant}  
  
Classifying a privilege circuit as M, O, or R captures its governance type but not its topology. Two ratchet circuits may have very different internal structures: different numbers of fibers, different privilege gradients, different arrangements of directed and bidirectional edges. To compare circuits across observation windows and across NHIs, we need a topological fingerprint.  
  
The M/O/R classification alone is sufficient for governance triage (Section~\ref{sec:governance}). The primorial invariant serves a more specific operational purpose: it enables efficient matching of privilege circuit topology across observation windows in the stability sweep (Section~\ref{sec:stability}), across NHIs within a lineage (detecting shared operational causes, as in Lineage~$L_B$'s homogeneous ratchets), and across lineages (cross-tenant comparison). We discuss the design rationale in Section~\ref{sec:design-rationale} after defining the encoding.  
  
\subsection{Elementary cycles and prime encoding}  
  
We use Johnson's algorithm~\cite{johnson1975} to enumerate all elementary cycles within a privilege circuit. Each elementary cycle is a simple directed closed walk visiting each vertex at most once.  
  
\begin{definition}[Primorial invariant]\label{def:primorial}  
Let $C$ be a privilege circuit and let $c = (e_1, e_2, \ldots, e_k)$ be an elementary cycle of length~$k$ within~$C$. Each edge $e_i$ connects fiber states $f_i \to f_{i+1}$ (indices mod $k$) and is characterized by:  
\begin{itemize}  
\item Its \emph{direction}: whether a reverse edge exists (bidirectional) or not (directed)  
\item Its \emph{privilege gradient}: whether $\mathrm{WAR}(f_{i+1}) > \mathrm{WAR}(f_i)$ (up), $< $ (down), or $=$ (flat), where WAR is a privilege score~\cite{parisel2025scoring}  
\end{itemize}  
  
The \emph{primorial encoding} assigns to each edge a rational factor from two disjoint prime families:  
\begin{itemize}  
\item \textbf{Small primes} $p_0 = 2, p_1 = 3, p_2 = 5, \ldots$ (one per edge position)  
\item \textbf{Large primes} starting at $q_0 = \mathrm{nextprime}(\mathrm{primorial}(k))$, where $\mathrm{primorial}(k) = \prod_{i=0}^{k-1} p_i$  
\end{itemize}  
  
The factor for edge $e_i$ depends on its type:  
  
\begin{center}  
\begin{tabular}{@{}lll@{}}  
\toprule  
\textbf{Direction} & \textbf{Gradient} & \textbf{Factor} \\  
\midrule  
Directed & flat & $1/p_i$ \\  
Directed & up & $q_j$ \\  
Directed & down & $1/(q_j^2 \cdot p_i)$ \\  
Bidirectional & flat & $1$ \\  
Bidirectional & up & $q_j / p_i$ \\  
Bidirectional & down & $1/(q_j \cdot p_i)$ \\  
\bottomrule  
\end{tabular}  
\end{center}  
  
where $p_i$ is the $i$-th small prime and $q_j$ is the next available large prime (advanced only when the edge is non-flat).  
  
The invariant $R(c)$ is the product of all edge factors:  
$  
  R(c) = \prod_{i=1}^{k} \mathrm{factor}(e_i) \in \mathbb{Q}  
$  
\end{definition}  
  
\begin{proposition}[Disjoint prime spaces]\label{prop:disjoint}  
For cycles of different lengths $k_1 \neq k_2$, the large prime families are disjoint: $\mathrm{nextprime}(\mathrm{primorial}(k_1)) \neq \mathrm{nextprime}(\mathrm{primorial}(k_2))$ for all relevant cases. This ensures that invariants from cycles of different lengths occupy disjoint prime spaces and are never accidentally equal.  
\end{proposition}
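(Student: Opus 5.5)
The first claim reduces to a strict-ordering fact about primorials and their next primes. Write $P_k=\mathrm{primorial}(k)=\prod_{i=0}^{k-1}p_i$ and $Q_k=\mathrm{nextprime}(P_k)$. I will show that $k\mapsto Q_k$ is strictly increasing for $k\ge 1$, which gives $Q_{k_1}\neq Q_{k_2}$ whenever $k_1\neq k_2$.

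The key tool is Bertrand's postulate. For every $n\ge1$ there is a prime $p$ with $n<p\le 2n$, so $Q_k\le 2P_k$ (for $k\ge1$, $P_k\ge2$ is not prime except $P_1=2$; with "next prime" taken as strictly greater, $Q_k<2P_k$). On the other hand $P_{k+1}=p_kP_k\ge 2P_k$, hence
\[
Q_k < 2P_k \le P_{k+1} < Q_{k+1}.
\]
Thus the starting large primes form a strictly increasing sequence, so they are pairwise distinct.

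The claim about families is stronger: the whole large-prime family used by a length-$k$ cycle must avoid the family used by a different length. A length-$k$ cycle advances $j$ at most $k$ times, so it uses at most the first $k$ primes $\ge Q_k$. I want these to lie strictly below $P_{k+1}$. This is where "for all relevant cases" enters. I would iterate Bertrand to bound the $k$-th prime after $P_k$ by $2^kP_k$, which does not sit below $P_{k+1}=p_kP_k$. So instead I would use a prime-gap estimate, e.g.\ Nagura's result that there is a prime in $(n,\tfrac65 n)$ for $n\ge 25$, or simply the prime number theorem for large $n$. Either shows that the interval $(P_k,P_{k+1})$, of length $(p_k-1)P_k$, contains at least $k$ primes. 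Small $k$ I would check by direct computation.

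Disjointness of the families then follows from the ordering, since the length-$k_1$ family lies in $(P_{k_1},P_{k_1+1})$ and these intervals are disjoint.

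The last step is "never accidentally equal." Assume $R(c_1)=R(c_2)$ with lengths $k_1<k_2$, and look at the $p$-adic valuations at the large primes. If $c_2$ has any non-flat edge, some $q$ from its family has nonzero valuation in $R(c_2)$ and zero in $R(c_1)$, which is a contradiction. I expect this to be the main obstacle, because the argument genuinely fails for cycles whose edges are all flat. Their invariants are products of small-prime reciprocals only, and in fact can coincide across lengths (an all-bidirectional-flat cycle gives $1$ at any length). I would therefore state the conclusion for cycles containing at least one non-flat edge and flag the all-flat case as a caveat on the proposition rather than try to prove it.
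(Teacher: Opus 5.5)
The paper gives no proof of this proposition; it is only asserted. So there is no argument of the paper's to compare yours against. Judged on its own, your proposal is sound except for one step.

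Your Bertrand chain $Q_k<2P_k\le P_{k+1}<Q_{k+1}$ correctly proves the displayed inequality. Your diagnosis of the final sentence is also correct, and it is a defect of the statement, not of your proof. All-bidirectional-flat cycles have invariant $1$ at every length. Likewise, the paper's own reported value $1/30$ (a $3$-cycle of directed flat edges) is also produced by a $4$-cycle with directed flat edges at positions $0,1,2$ and a bidirectional flat edge at position $3$.

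You were right to take the large prime from the \emph{longer} cycle. The symmetric argument would fail because, for example, $Q_2=7=p_3$ is a small prime for every length $\ge 4$. If you want the sharp form (cross-length collisions occur only between two all-flat cycles), settle the mixed case by sign. A cycle with a non-flat edge must contain an up edge, since WAR returns to its starting value. Its invariant therefore has some large prime with exponent $+1$, whereas an all-flat invariant has only nonpositive exponents.

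The step that fails as written is the count showing that $(P_k,P_{k+1})$ contains at least $k$ primes. Nagura's theorem cannot deliver this for all $k$:
\begin{itemize}
\item It certifies one prime per interval $(n,\tfrac{6}{5}n)$.
\item At most $\log_{6/5}p_k$ disjoint such intervals fit inside $(P_k,p_kP_k)$.
\item That number is less than $k$ from $k=26$ on, since $(6/5)^{26}\approx 114>103=p_{26}$.
\end{itemize}
Nagura suffices for $k\le 25$, which may be all the paper means by ``relevant cases,'' but you claim the result for every $k$. Any fixed-ratio gap theorem has the same defect, because $(1+\varepsilon)^k$ outgrows $p_k$.

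What you need is a lower bound on a prime \emph{count}. For instance, the Rosser--Schoenfeld bounds $x/\ln x<\pi(x)<1.26\,x/\ln x$ for $x\ge 17$ give
$\pi(3P_k)-\pi(P_k)>P_k\bigl(3/\ln(3P_k)-1.26/\ln P_k\bigr)$.
This is comfortably above $k$ for $k\ge 3$, and $3P_k\le P_{k+1}$; the cases $k\le 2$ can be checked by hand. The bare prime number theorem, without explicit constants, would leave the range of ``small $k$'' to be checked undetermined.

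With that repair, and with the conclusion restricted as you propose, the argument is complete.
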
  
  
\begin{definition}[R-vector]\label{def:rvector}  
The \emph{R-vector} of a privilege circuit $C$ is the sorted list of primorial invariants of all elementary cycles in~$C$:  
$  
  \mathbf{R}(C) = \mathrm{sort}\bigl(\{R(c) : c \text{ is an elementary cycle in } C\}\bigr)  
$  
Two privilege circuits with the same R-vector have the same topological structure (same arrangement of directed/bidirectional edges and privilege gradients).  
\end{definition}  
  
\begin{remark}[Decodability]  
The encoding is injective: given $R(c)$ and the cycle length~$k$, the edge types can be uniquely recovered by factoring $R$ against the known prime families. This decodability property is not required for detection but provides an audit trail: the invariant fully describes the privilege dynamics of the cycle.  
\end{remark}  
  
\subsection{Design rationale}\label{sec:design-rationale}  
  
The primorial invariant occupies a deliberate position in the design space between under-engineering and over-engineering. The stability sweep (Section~\ref{sec:stability}) requires comparing privilege circuit topology across many observation windows and NHIs. This demands a fingerprint that is cheap to compute, discriminating enough to distinguish governance-relevant differences, and cacheable for repeated lookup.  
  
\medskip  
\noindent\textbf{Alternatives that under-discriminate.}  
Cycle count alone (number of elementary cycles per SCC) conflates circuits with the same number of cycles but different edge arrangements. The Ihara zeta function of the transition graph captures cycle length distribution but erases edge directionality and privilege gradient, which are the governance-relevant features distinguishing oscillators from ratchets.  
  
\medskip  
\noindent\textbf{Alternatives that over-engineer.}  
Full subgraph isomorphism testing is NP-hard and unnecessary when the question is ``same topology?'' rather than ``same vertex labeling?'' Spectral methods (adjacency matrix eigenvalues) require numerical tolerance for floating-point comparison, lose decodability, and do not naturally encode the directed/bidirectional distinction that defines the M/O/R classification.  
  
\medskip  
\noindent\textbf{What the primorial encoding provides.}  
The encoding is computable in polynomial time (Johnson's algorithm for cycle enumeration, then $O(k)$ arithmetic per cycle). It discriminates edge direction and privilege gradient at each edge position. The output is an exact rational number, enabling $O(1)$ equality testing via hash lookup with no numerical tolerance. The disjoint prime space property (Proposition~\ref{prop:disjoint}) prevents accidental collisions across cycle lengths. And the encoding is fully decodable: given the invariant and cycle length, the complete edge-type sequence can be reconstructed, providing an audit trail rather than an opaque hash.  
  
The invariant is not essential to the M/O/R classification or to the permission envelope computation. Its role is operational: it makes the stability sweep and cross-NHI comparison tractable and exact.  
  
\subsection{Role of the invariant in temporal analysis}  
  
The R-vector serves as a \emph{structural fingerprint} for privilege circuits. Its key property is \textbf{window-stability}: if an NHI participates in a ratchet circuit with R-vector $\mathbf{R}$ at window size $h = 15$, and the same R-vector appears at $h = 30, 45, 60, \ldots$, the ratchet is not a transient artifact of the observation window. It is a structural feature of the NHI's privilege dynamics. This property enables the stability sweep methodology (Section~\ref{sec:stability}).  
  
\section{Windowed Temporal Methodology}\label{sec:methodology}  
  
The definitions in Sections~\ref{sec:circuits} and~\ref{sec:invariant} apply to a single observation window. The methodological contribution of this paper is extending privilege circuit analysis across multiple time scales, enabling both multi-scale classification and predictive validation.  
  
\subsection{Windowed analysis}\label{sec:windowed}  
  
\begin{definition}[Observation window]\label{def:window}  
An observation window of size $h$ anchored at snapshot $t_0$ is the contiguous block of $h$ snapshots $[t_0 + 1 - h,\; t_0]$. The fiber transition graph is constructed from transitions within this window only.  
\end{definition}  
  
An NHI is \emph{valid} in a window only if it belonged to the target lineage at every snapshot in the window. NHIs that are dormant, too young, or in a different lineage at any snapshot within the window are excluded.  
  
For a single window of size $h$, we construct the fiber transition graph, compute SCCs via Tarjan's algorithm, classify each as M/O/R, and compute the R-vector for non-monolith circuits. This is the \textbf{standard analysis}.  

\subsection{Multi-scale analysis}\label{sec:multiscale}  
A single window size may misclassify privilege circuits whose  
natural period does not align with~$h$. Multi-scale analysis  
addresses this by examining the same NHI across multiple temporal  
granularities simultaneously.  
  
\begin{definition}[Multi-scale decomposition]\label{def:multiscale}  
Given window sizes $h_1 < h_2 < \cdots < h_m$ and a period  
multiplier~$P$, the total observation span is  
$P \times \mathrm{lcm}(h_1, \ldots, h_m)$ snapshots. Each  
scale~$h_i$ produces $\lfloor \text{span} / h_i \rfloor$  
non-overlapping windows. The NHI is classified independently at  
each window of each scale.  
\end{definition}  
  
Cross-scale comparison yields a \emph{triage classification}:  
  
\begin{center}  
\begin{tabular}{@{}lp{10cm}@{}}  
\toprule  
\textbf{Triage label} & \textbf{Meaning} \\  
\midrule  
\texttt{true\_ratchet} & R at all scales at the latest window;  
  structurally persistent \\  
\texttt{artifact\_ratchet} & R at a finer scale, O at a coarser  
  one; the ratchet is scale-dependent \\  
\texttt{slow\_ratchet} & M at fine scales, R only at the coarsest;  
  the cycle is slower than fine windows can resolve \\  
\texttt{oscillator} & O at every scale; bounded reversible  
  cycling \\  
\texttt{dormant\_oscillator} & O at coarse scale, M at fine scale;  
  oscillation slower than the finest window \\  
\texttt{resolved\_ratchet} & R appeared historically but was not  
  sustained \\  
\texttt{drifting} & All valid windows show M; no cycling  
  detected \\  
\texttt{mixed} & Any other combination \\  
\bottomrule  
\end{tabular}  
\end{center}  
  
The most actionable triage is \texttt{artifact\_ratchet}: the  
ratchet disappears at coarser windows, suggesting it reflects a  
sub-period dynamic rather than structural escalation.  
\texttt{true\_ratchet} across all scales is the strongest  
structural signal.  
  
\subsection{Stability sweep}\label{sec:stability}  
Multi-scale analysis uses a fixed set of window sizes. The stability  
sweep asks a broader question: does the same R-type pattern recur  
across \emph{many} window sizes?    
  
\begin{definition}[Stability]\label{def:stability}  
An NHI is \emph{R-stable} under a set of window sizes $H = \{h_1, h_2, \ldots, h_n\}$ if the same circuit type (R) appears at $\geq K$ distinct window sizes, where $K$ is a minimum recurrence threshold.  
\end{definition}  
  
\begin{definition}[Auto sweep]\label{def:autosweep}  
For a period base $b$, the sweep tests window sizes $H_b = \{b, 2b, 3b, \ldots, kb\}$ where $kb \leq t_0$. An NHI is \emph{pinned} to the smallest base $b$ at which it achieves R-stability. Multiple bases (e.g., $b \in \{5, 6, 7\}$) are tested in ascending order; once pinned, an NHI cannot be claimed by a later base.  
\end{definition}  
  
The GCD of the matching window sizes, converted to calendar time via the average snapshot spacing, estimates the natural period of the privilege cycle.  
  
\subsection{Backtesting}\label{sec:backtesting}  
  
The stability sweep identifies which NHIs are structurally R-stable over the full observation history. Backtesting asks: \emph{could we have detected them earlier?}  
  
The methodology is a three-phase protocol: (1)~establish ground truth via the full stability sweep, (2)~compute the early-observation signal,  
and (3)~score the predictor against ground truth. We summarize the key definitions here for reference.  
  
\begin{definition}[Early-observation predictor]\label{def:predictor}  
For base $b$ and minimum recurrence $K$, the observation window is $h_{\mathrm{obs}} = K \cdot b$ (the smallest window at which $K$ recurrences are possible). The predictor is:  
$  
  \hat{y}(\text{NHI}) = \begin{cases}  
    \text{R-stable} & \text{if circuit type at } h_{\mathrm{obs}} \text{ is R} \\  
    \text{not stable} & \text{otherwise}  
  \end{cases}  
$  
\end{definition}  
  
Ground truth is established from the full stability sweep. NHIs whose stability count falls below a reference threshold are classified as \emph{borderline} and excluded from scoring, as their ground truth is ambiguous. The predictor is evaluated via standard confusion matrix metrics (Precision, Recall, F1). See Appendix~\ref{app:backtesting} for the complete specification.  
  
\section{Empirical Results}\label{sec:empirical}  
  
We evaluate the framework on a production Azure tenant observed over approximately one year.\footnote{Exact dates, population counts, and lineage identifiers have been lightly anonymized. All percentages, ratios, and confusion matrix metrics are reported from the actual data. Population counts are rounded to two significant figures. Lineage identifiers are replaced with arbitrary labels $L_A, L_B, L_C$.} The dataset comprises approximately 100 snapshots.  
  
\medskip  
\noindent\textbf{Population summary.}  
At the baseline snapshot~$t_0$, approximately 11,000 NHIs held at least one permission assignment (the \emph{baseline population}). Over the observation period, approximately 20,000 additional NHIs were discovered (created or reactivated after~$t_0$), which we call \emph{non-baseline NHIs}. The total distinct NHI population observed across all snapshots is approximately 31,000. Of the non-baseline NHIs, approximately 12,000 are \emph{mature}: they were discovered at least halfway through the observation window before the final snapshot~$t_N$, providing sufficient observation time for convergence analysis. The remaining non-baseline NHIs are too young for meaningful assessment and are excluded from convergence statistics.  
  
We first validate structural assumptions, then present privilege circuit detection and the backtesting results.  
  
\subsection{Assumption validation}\label{sec:assumptions}  
  
\medskip  
\noindent\textbf{Assumption 1: Baseline independence.}  
Of the approximately 12,000 mature non-baseline NHIs:  
  
\begin{center}  
\begin{tabular}{@{}lr@{}}  
\toprule  
\textbf{Status} & \textbf{Fraction} \\  
\midrule  
Converged to baseline fiber & 99.2\% \\  
Never converged (still active at $t_N$) & 0.2\% \\  
Never converged (dormant at $t_N$) & 0.6\% \\  
\bottomrule  
\end{tabular}  
\end{center}  
  
Convergence is effectively instantaneous: 98.8\% of converging NHIs did so at their very first observed snapshot ($\Delta t = 0$). Of the remainder, most converged within 8 weeks; the latest convergence occurred at approximately the maturity threshold. \textbf{Validated}: governance conclusions carry at most a 0.8\% caveat from baseline choice.  
  
\medskip  
\noindent\textbf{Assumption 2: Monolith prevalence.}  
At the final snapshot:  
  
\begin{center}  
\begin{tabular}{@{}lr@{}}  
\toprule  
\textbf{Type} & \textbf{Fraction} \\  
\midrule  
Monolithic (single fiber) & 88.7\% \\  
Composite (multi-fiber) & 11.3\% \\  
\bottomrule  
\end{tabular}  
\end{center}  
  
\textbf{Validated}: temporal dynamics are confined to a small minority of lineages.  
  
\medskip  
\noindent\textbf{Assumption 3: Lineage equilibrium.}  
The lineage count stabilizes within 8 snapshots (approximately 2 weeks), settling at an 11.3\% reduction from the initial count. After stabilization, fluctuation remains within $\pm 2.4\%$ of the equilibrium mean. \textbf{Validated}: the partition reaches a stable regime rapidly.  
  
\subsection{Inter-lineage dynamics}\label{sec:inter-dynamics}  
  
Over the observation period, approximately 90 cross-lineage merger events were recorded. The process decomposes into two qualitatively different regimes:  
  
\medskip  
\noindent\textbf{Regime 1: Policy-driven selective sweep.}  
Lineage~$L_A$ absorbed approximately 65 other lineages. Together with Lineage~$L_C$ (7 absorptions), the top two absorbers account for a vast majority of all mergers. The dominant absorber exhibits burst dynamics: 15 simultaneous absorptions at one snapshot and 10 at another. These are operational standardization actions, not stochastic coalescence.  
  
\medskip  
\noindent\textbf{Regime 2: Dormancy-shielded stasis.}  
Excluding Lineage~$L_A$, the remaining absorbers account for approximately 23 absorptions across the observation period. At any snapshot, 95.1\% of fibers are dormant, producing a seed bank that slows the effective merger rate by a factor of ${\approx}\,20$. The residual merger process matches the rate predicted by a seed-bank-adjusted coalescent model but fails every structural test (multiplicity, exchangeability, temporal uniformity). Details appear in Appendix~\ref{app:coalescent}.  
  
\medskip  
\noindent\textbf{Intra-lineage dynamics.}  
Within composite lineages, the dominant pattern inverts: 40\% are static after assembly, 25\% show only fragmentation (fiber splitting), and only 21\% exhibit any intra-lineage coalescence. The system \emph{consolidates at the lineage level} while \emph{diversifying at the fiber level}.  
  
\subsection{Privilege circuit detection}\label{sec:circuit-detection}  
  
We apply the stability sweep to the two composite lineages with non-trivial privilege circuit activity: Lineage~$L_A$ (the dominant absorber, ${\approx}\,1{,}200$ NHIs at anchor) and Lineage~$L_B$ (a smaller composite, ${\approx}\,20$ NHIs at anchor). Bases $\{5, 6, 7\}$ with minimum recurrence $K = 3$ are used throughout.  
  
\subsubsection{Lineage~$L_A$: stability sweep}  
  
\begin{center}  
\begin{tabular}{@{}lrrr@{}}  
\toprule  
\textbf{Base} & \textbf{Window sizes tested} & \textbf{R-stable NHIs} & \textbf{Max recurrence} \\  
\midrule  
$b = 5$ & 21 & 132 & 18 of 21 \\  
$b = 6$ & 17 & 119 & 15 of 17 \\  
$b = 7$ & 15 & 117 & 13 of 15 \\  
\bottomrule  
\end{tabular}  
\end{center}  
  
At base 5, 132 NHIs show R-type privilege circuits recurring at $\geq 3$ of the 21 window sizes tested. The strongest signals recur at 18 of 21 window sizes, confirming that the ratchet pattern is structural, not an artifact of any particular observation window. The transition from R to M at the largest $h$ values is expected: windows approaching the full observation span leave too few snapshots for cycle completion.  
  
\subsubsection{Lineage~$L_B$: structural homogeneity}  
  
Standard-mode analysis of Lineage~$L_B$ reveals a striking pattern: of 6 NHIs with temporal dynamics, 5 exhibit ratchet-type privilege circuits sharing an identical R-vector $\mathbf{R} = [[\,], [1/30]]$, while 1 exhibits an oscillator circuit. The shared R-vector indicates that the 5 ratchet NHIs traverse topologically identical privilege circuits: the same arrangement of directed and bidirectional edges at the same privilege gradients.  
  
This structural homogeneity contrasts with Lineage~$L_A$, where diverse R-vectors coexist. In Lineage~$L_B$, a single ratchet topology dominates, suggesting a common operational cause (e.g., a shared automation pipeline or provisioning template). This is precisely the kind of cross-NHI comparison that the primorial invariant enables: without an exact topological fingerprint, the shared structure would require pairwise graph isomorphism tests.  
  
\subsubsection{Lineage~$L_C$: a significant negative result}\label{sec:lineage-lc}  
  
Lineage~$L_C$ is the second-largest absorber by cross-lineage merger count (7 lineages absorbed). Despite this absorption activity, temporal analysis reveals \emph{zero} privilege circuits: all NHIs in Lineage~$L_C$ are monolith-only, meaning every NHI remained in a single fiber throughout every observation window tested.  
  
This is a discriminant validity result. Absorption activity (structural consolidation at the lineage level) does not imply privilege circuit activity (temporal dynamics at the fiber level). The merger process and the ratchet phenomenon are structurally independent. Lineage~$L_C$ is large and active by the standards of the lineage partition, yet temporally inert. The framework correctly identifies it as governance-benign despite its size.  

\subsection{Backtesting: predictive validation}\label{sec:backtest-results}  
  
We test whether early observation of ratchet-type circuits predicts long-term R-stability. The complete backtesting protocol, including ground-truth construction, borderline exclusion, scoring methodology, and design rationale, is detailed in Appendix~\ref{app:backtesting}. Here we summarize the protocol and present the results.  
  
\subsubsection{Protocol summary}  
  
For each base $b \in \{5, 6, 7\}$ with minimum recurrence $K = 3$:  
\begin{enumerate}  
\item \textbf{Ground truth}: Full stability sweep; R-stable if R at $\geq 3$ window sizes; borderline NHIs (count $< 4$) excluded.  
\item \textbf{Observation}: Circuit type at $h_{\mathrm{obs}} = 3b$.  
\item \textbf{Scoring}: Confusion matrix against ground truth.  
\end{enumerate}  
  
Each base is evaluated independently with its own ground truth and observation window (see Appendix~\ref{app:backtesting}, Section~\ref{app:per-base-independence} for rationale).  
  
\subsubsection{Lineage~$L_A$ results}  
  
\begin{center}  
\begin{tabular}{@{}l rrrr@{}}  
\toprule  
\textbf{Base} & \textbf{TP} & \textbf{FP} & \textbf{Precision} & \textbf{F1} \\  
\midrule  
$b = 5$ ($h_{\mathrm{obs}} = 15$) & 93 & 6 & 0.939 & 0.805 \\  
$b = 6$ ($h_{\mathrm{obs}} = 18$) & 94 & 0 & 1.000 & 0.883 \\  
$b = 7$ ($h_{\mathrm{obs}} = 21$) & 95 & 0 & 1.000 & 0.896 \\  
\midrule  
\textbf{Aggregate} & 282 & 6 & 0.979 & 0.860 \\  
\bottomrule  
\end{tabular}  
\end{center}  
  
At bases 6 and 7, the predictor achieves Precision=1.000. At base~5, 6 false positives appear; all 6 concentrate in a single fiber (FPR=0.55 within that fiber). This fiber produces ratchet-like signatures at the 15-snapshot scale that do not persist at coarser scales, a scale-dependent artifact rather than a structural ratchet. The artifact vanishes entirely at $h_{\mathrm{obs}} = 18$ and $h_{\mathrm{obs}} = 21$, confirming that multi-base evaluation is not redundant: it reveals the temporal resolution at which the signal becomes reliable.  
  
\subsubsection{Lineage~$L_B$ results}  
\begin{center}  
\begin{tabular}{@{}l rrrr@{}}  
\toprule  
\textbf{Base} & TP & FP & Prec. & F1 \\  
\midrule  
Aggregate across 3 bases & 15 & 0 & 1.000 & 1.000 \\  
\bottomrule  
\end{tabular}  
\end{center}  
  
Lineage~$L_B$ achieves perfect precision \emph{and} perfect recall on its population (15 R-stable NHIs, 3 not-stable, 0 borderline). The smaller population (${\approx}\,20$ vs.\ ${\approx}\,1{,}200$ NHIs) limits statistical power but the result is consistent with Lineage~$L_A$'s findings at bases 6 and 7.  
  
\subsubsection{Cross-lineage summary}  
  
\begin{center}  
\begin{tabular}{@{}l rrrr l@{}}  
\toprule  
\textbf{Lineage} & \textbf{NHIs} & \textbf{TP} & \textbf{FP} & \textbf{F1} & \textbf{Character} \\  
\midrule  
$L_A$ & ${\approx}\,1{,}200$ & 282 & 6 & 0.860 & Diverse ratchet topologies \\  
$L_B$ & ${\approx}\,20$ & 15 & 0 & 1.000 & Homogeneous ratchet topology \\  
$L_C$ & ${\approx}\,20$ & 0 & 0 & n/a & No privilege circuits (all monolith) \\  
\bottomrule  
\end{tabular}  
\end{center}  
  
Three qualitatively different outcomes from the three largest composite lineages: diverse ratchets ($L_A$), homogeneous ratchets ($L_B$), and temporal inertness ($L_C$). The framework discriminates between them.  
  
\subsubsection{Interpretation}  
  
Four findings emerge:  
  
\begin{enumerate}[leftmargin=*]  
\item \textbf{Zero false positives at bases 6 and 7, across all tested lineages.} The predictor achieves Precision=1.000. If the predictor says an NHI is a structural ratchet at $h_{\mathrm{obs}} \geq 18$, it has never been wrong in this dataset.  
  
\item \textbf{Scale-dependent artifacts are localized and identifiable.} The 6 false positives at base~5 all originate from a single fiber whose provisioning pattern mimics a ratchet at the 15-snapshot scale but not at coarser scales. Multi-base evaluation naturally exposes such artifacts.  
  
\item \textbf{Recall improves with observation depth.} In Lineage~$L_A$, recall increases from 0.706 (base 5) to 0.812 (base 7), reflecting that longer observation windows allow more cycles to complete before the prediction is made.  
  
\item \textbf{Absorption does not imply escalation.} Lineage~$L_C$'s zero-circuit result demonstrates that the framework produces meaningful negative findings. Large, actively absorbing lineages are not automatically flagged. The privilege circuit analysis is independent of the lineage merger process.  
\end{enumerate}    
  
\subsection{Falsification criteria}\label{sec:falsification}  
  
The framework makes several falsifiable predictions, all tested:  
  
\begin{enumerate}[leftmargin=*]  
\item \textbf{Lineage convergence fails}: \emph{Refuted.}  
  $|L(F(b), t)|$ stabilizes within 8 snapshots.  
  
\item \textbf{Orphan attrition is not rapid}: \emph{Refuted.}  
  99.2\% of mature non-baseline NHIs converge to baseline, 98.8\% at $\Delta t = 0$.  
  
\item \textbf{Monolith prevalence is low}: \emph{Refuted.}  
  88.7\% of lineage blocks are monolithic.  
  
\item \textbf{No privilege circuits in composite lineages}: \emph{Refuted.}  
  132 NHIs show R-stable privilege circuits in Lineage~$L_A$; 15 in Lineage~$L_B$.  
  
\item \textbf{Ratchet patterns are transient}: \emph{Refuted.}  
  The strongest ratchets recur at 18 of 21 window sizes.  
  
\item \textbf{Early detection is unreliable}: \emph{Refuted.}  
  Backtesting achieves Precision=1.000 and F1=0.896 at base 7.  
  
\item \textbf{Absorption implies privilege circuits}: \emph{Refuted.}  
  Lineage~$L_C$ absorbed 7 lineages but contains zero R-stable privilege circuits. Merger activity and ratchet structure are independent.  
\end{enumerate}  

\section{Implications for the Governance of NHIs}\label{sec:governance}  
  
The backtesting results (Section~\ref{sec:backtest-results}) elevate several governance implications from conjectures to empirically supported findings. We distinguish between \emph{validated} implications (supported by the backtesting evidence) and \emph{conjectured} implications (structurally motivated but not yet tested in production governance workflows).  
  
\subsection{Validated: circuit-aware risk assessment}\label{sec:risk}  
  
Standard risk assessment assigns a score to each NHI based on its current permissions.  
  
\begin{theorem}[Risk underestimation by snapshot audit]\label{thm:risk}  
For NHIs participating in non-monolith privilege circuits, risk assessment based on the current fiber alone systematically underestimates risk. The permission envelope (Definition~\ref{def:envelope}) provides the correct risk attribution.  
\end{theorem}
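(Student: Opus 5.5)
The plan is to make the informal statement precise and then prove it by a monotonicity argument on the WAR norm of Proposition~\ref{prop:war-fiber}. Fix a lineage $L$ and a window. Let $C$ be a non-monolith privilege circuit ($|V(C)|>1$) of the transition graph $(\mathcal{F}_L,\to_L)$, and suppose an NHI currently sits in fiber $f\in C$. The snapshot score is $\mathrm{WAR}(f)$. The circuit score is $\mathrm{risk}(C)=\max_{g\in C}\mathrm{WAR}(g)$, which Proposition~\ref{prop:war-fiber} makes well defined on fibers rather than on NHIs.

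I would prove two claims. The first is the weak inequality $\mathrm{WAR}(f)\le\mathrm{risk}(C)$ for every $f\in C$. This is immediate because $f\in C$, so $f$ is one of the terms in the maximum.

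The second claim is that underestimation is \emph{systematic}, meaning it is strict for some phase of the circuit. Suppose WAR is not constant on $C$. Let $g^\ast$ be an argmax. Every $f\in C$ with $\mathrm{WAR}(f)<\mathrm{WAR}(g^\ast)$ is strictly underestimated. Strong connectivity (Definition~\ref{def:circuit}) guarantees a directed path $f\leadsto g^\ast$ inside $C$. So the state $g^\ast$ is dynamically reachable from the NHI's current fiber through observed transitions, and this justifies attributing its privilege to the NHI.

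For the correctness of the envelope, I would show that $\mathrm{risk}(C)$ is the \emph{least} score that is invariant under temporal equivalence (Definition~\ref{def:tempequiv}) and dominates the snapshot score. Any function $\rho$ constant on $C$ with $\rho\ge\mathrm{WAR}$ pointwise satisfies $\rho\ge\max_{g\in C}\mathrm{WAR}(g)$. The maximum itself is constant on $C$, mirroring $\mathcal{E}(f)=\mathcal{E}(g)$ in Definition~\ref{def:envelope}. So the envelope is the minimal consistent upper bound. Equivalently, in the language of the decomposition, the score factors through the temporal quotient just as Proposition~\ref{prop:war-fiber} factors WAR through the fibration.

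The main obstacle is the word ``systematically.'' If every fiber in $C$ has equal WAR (flat edges only, e.g.\ an oscillation between two fibers at the same privilege level), there is no strict underestimation. I would therefore restrict the strict statement to circuits containing at least one non-flat edge in the sense of the gradient in Definition~\ref{def:primorial}. Such an edge $f\to g$ has $\mathrm{WAR}(f)\neq\mathrm{WAR}(g)$, so WAR is non-constant on $C$ and the argument above applies. I would add a remark that the flat case is exactly where snapshot and envelope scores coincide.

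A secondary subtlety is monotonicity of WAR under set inclusion. It is needed only if one wants $\mathrm{WAR}(\mathcal{E}(f))$ rather than the maximum over states. I would avoid this by stating the theorem with $\mathrm{risk}(C)$ as defined in Section~\ref{sec:envelope}. If a monotone WAR is available from~\cite{parisel2025scoring}, I would note that $\mathrm{WAR}(\mathcal{E}(f))\ge\mathrm{risk}(C)$, which only strengthens the inequality.
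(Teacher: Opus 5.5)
Your argument is correct, and it takes a genuinely different route from the paper. The paper gives no deductive proof. It takes $\mathrm{risk}(C)=\max_{g\in C}\mathrm{WAR}(g)$ from Section~\ref{sec:envelope} as the definition of correct risk, then declares the theorem validated empirically: R-stable circuits exist in $L_A$ and $L_B$, and backtesting detects them early.

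You instead isolate the mathematical content:
\begin{itemize}
\item The weak inequality $\mathrm{WAR}(f)\le\mathrm{risk}(C)$ is immediate from $f\in C$.
\item Strict underestimation occurs exactly at the non-argmax phases of circuits on which WAR is non-constant. By strong connectivity, these are the circuits with at least one non-flat edge.
\item $\mathrm{risk}(C)$ is the least score that is constant on each circuit and dominates WAR. This turns the paper's bare assertion of ``correct attribution'' into a characterization.
\end{itemize}

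The paper's empirical route buys non-vacuity: such circuits occur and persist. But the existence of R-stable circuits says nothing about privilege gradients inside them, so it cannot establish strictness.

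Your flat-case restriction is not a technicality; the paper's own data falls into it. The only invariant in the $L_B$ R-vector is $1/30=1/(2\cdot 3\cdot 5)$, which contains no large prime. Each non-flat edge contributes its own $q_j$, and distinct $q_j$ cannot cancel, so that cycle is entirely directed-flat. Every edge of a strongly connected component lies on an elementary cycle, so WAR is constant on those circuits. Hence, for exactly the $L_B$ NHIs the paper cites as confirmation, the snapshot score equals $\mathrm{risk}(C)$.

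Stating the result with $\mathrm{risk}(C)$ rather than $\mathrm{WAR}(\mathcal{E}(f))$ is also the right call. Under the union reading, strictness in flat circuits would need strict monotonicity of WAR under inclusion, which the paper does not supply.
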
  
  
This is no longer a conjecture: the stability sweep confirms that 132 NHIs in Lineage~$L_A$ and 15 in Lineage~$L_B$ participate in R-stable privilege circuits, and backtesting confirms that these circuits are detectable early with zero false positives. The risk correction affects NHIs in the 11.3\% of lineages that are composite, but these are precisely the lineages where governance risk is structurally concentrated.  
  
\subsection{Validated: early detection is operationally feasible}  
  
The backtesting results demonstrate that a simple predictor (``does this NHI show R-type at $h_{\mathrm{obs}} = 3b$?'') achieves Precision=1.000. This means:  
  
\begin{enumerate}[leftmargin=*]  
\item \textbf{Zero alert fatigue}: Every alert corresponds to a genuine structural ratchet.  
\item \textbf{Actionable at first detection}: The predictor fires after observing just $3b$ snapshots (e.g., 21 snapshots at weekly cadence).  
\end{enumerate}  
  
\subsection{Conjecture: transitive de-escalation}\label{sec:deescalation}  
  
\begin{conjecture}[Transitive de-escalation requirement]\label{conj:transitive}  
To durably de-escalate an NHI assigned to a fiber state~$f$  
participating in a privilege circuit~$C$, one must de-escalate all fiber states  
in~$C$, the complete set of permission states reachable from~$f$.  
De-escalating only the current fiber state is ineffective if the  
transition dynamics will carry the fiber into a high-privilege state  
within the circuit.  
\end{conjecture}  
  
The R-stable NHIs in Lineages~$L_A$ and~$L_B$ are concrete instances where this applies: the ratchet circuit may reach a high-privilege state via a one-way edge, and de-escalation may require passing through other high-privilege states first.  
  
\subsection{Conjecture: alert suppression}\label{sec:suppression}  
  
\begin{conjecture}[Alert suppression criterion]\label{conj:suppression}  
Intra-lineage fiber transitions within monolith or oscillator privilege circuits should be suppressed from security monitoring, as they represent bounded, reversible dynamics. Only ratchet-type transitions, confirmed by stability sweep, warrant investigation.  
\end{conjecture}  
  
The three-type classification provides the structural basis: M-type transitions are static (no alert needed), O-type transitions are reversible (bounded risk, suppressible), and only R-type transitions indicate irreversible escalation.  
  
\subsection{Conjecture: absorber governance}\label{sec:absorber-governance}  
  
\begin{conjecture}[Absorber detection replaces coalescent modeling]\label{conj:absorber}  
Governance should focus on detecting dominant absorber lineages, lineages whose merger propensity far exceeds the population average, rather than modeling the merger process stochastically. In our tenant, two absorbers ($L_A$ and~$L_C$) account for a vast majority of all cross-lineage mergers. These are the lineages where permission consolidation is actively occurring and where governance review should concentrate.  
\end{conjecture}  
  
\section{Discussion}\label{sec:discussion}  
  
\subsection{Related work}  
  
\medskip  
\noindent\textbf{Graph fibrations.}  
Boldi and Vigna~\cite{boldi2002fibrations} introduced graph fibrations  
as structure-preserving maps with applications to distributed computing.  
Morone et al.~\cite{morone2020fibrations} applied fibration symmetries  
to biological networks, showing they reveal functional building blocks  
invisible to standard graph-theoretic measures. Our work extends fibration from static graph analysis to temporal dynamics,  
introducing privilege circuits in the transition graph as the complementary  
structure for time-evolving permission graphs.  
  
\medskip  
\noindent\textbf{Coalescent theory.}  
Kingman's coalescent~\cite{kingman1982coalescent} models pairwise ancestral mergers backward in time. Seed bank extensions by Blath et al.~\cite{blath2013seedbank} introduce dormancy effects structurally analogous to those observed in our fiber dynamics. We borrow the vocabulary of merging lineages, dormancy, and seed banks because the structural phenomena are analogous, but our empirical analysis (Appendix~\ref{app:coalescent}) shows that the engine of lineage consolidation is operational policy, not stochastic drift.  
  
\medskip  
\noindent\textbf{SCC analysis and graph algorithms.}  
Strongly connected component decomposition is a classical  
graph-theoretic tool computable in linear time via Tarjan's  
algorithm~\cite{tarjan1972depth}. Johnson's algorithm~\cite{johnson1975} enumerates all elementary circuits in a directed graph. We apply both in a non-standard  
setting: the fiber transition graph, where nodes are permission  
profiles and edges represent observed transitions between snapshots.  
The three-type privilege circuit classification based on edge symmetry appears to be new in this application context.  
 
\medskip  
\noindent\textbf{Automated policy reasoning and attack graphs.}  
A substantial body of work addresses static verification of access  
control policies. Backes et al.~\cite{backes2018semantics} (Zelkova)  
use SMT-based reasoning to determine whether an AWS IAM policy  
permits a given access pattern, enabling provable guarantees about  
policy properties. AWS IAM Access Analyzer~\cite{aws2019accessanalyzer}  
operationalizes this approach for production use. 
These approaches reason about  
the policy language: given the current policy configuration,  
what accesses are permitted? They do not model the temporal  
evolution of policy configurations themselves.  

\medskip  
\noindent\textbf{Attack graph analysis} ~\cite{sheyner2002automated,ou2006scalable}  
models multi-step privilege escalation through vulnerability  
chains, constructing directed graphs where nodes represent system  
states and edges represent exploits. Our fiber transition graph  
is structurally analogous (nodes are permission states, edges are  
observed transitions) but differs in two respects: (1)~edges  
represent observed operational transitions, not hypothetical  
exploit chains, and (2)~the SCC decomposition identifies  
recurring privilege dynamics (circuits), not single-path  
reachability. The privilege circuit concept could be viewed as  
the temporal closure of the attack graph idea: rather than asking  
``can the attacker reach state $g$ from state $f$?'' we ask  
``does the system operationally cycle through states  
$f$ and $g$, and if so, is the cycling reversible?''  
 
\medskip  
\noindent\textbf{Temporal access analysis.}  
Some CSPM tools (Microsoft Entra Permissions Management, CrowdStrike, Wiz) track permission changes over time and flag anomalous modifications. However, to our knowledge, no existing tool:  
\begin{enumerate}  
\item Groups NHIs by structural equivalence (fibration) at each snapshot  
\item Tracks the evolution of these equivalence classes as a lineage partition  
\item Identifies cyclic permission dynamics (privilege circuits) within lineages  
\item Distinguishes reversible from irreversible privilege transitions via circuit classification  
\item Computes permission envelopes across temporal equivalence classes  
\item Validates predictive power via backtesting  
\end{enumerate}  
  
\subsection{Limitations}  
  
\begin{enumerate}[leftmargin=*]  
\item \textbf{Single tenant.} All empirical results come from one Azure tenant. Whether the qualitative findings (monolith prevalence, selective-sweep mergers, ratchet stability, backtesting precision) generalize across tenants, industries, and cloud providers is unknown.  
  
\item \textbf{Limited cross-lineage backtesting.} Backtesting results cover two lineages ($L_A$ and $L_B$) with consistent Precision=1.000. Lineage~$L_C$ (the second-largest absorber) shows no stable ratchets, confirming that the framework produces meaningful negative results. However, the positive validation remains concentrated in lineages with sufficient NHI populations for statistical evaluation.  
  
\item \textbf{Operational validation.} Backtesting validates that the predictor identifies structurally stable ratchets. Whether these ratchets correspond to real security incidents (privilege escalation exploits, misconfigured service accounts, lateral movement) is untested. The framework detects a well-defined graph property; its security relevance is assumed, not demonstrated.  
  
\item \textbf{Granularity.} The framework operates at the fiber level (sets of role assignment pairs). It does not model individual permission-level changes, conditional access policies, just-in-time elevation, or other IAM features that operate at finer granularity.  
  
\item \textbf{Scalability.} The transition graph grows with observation time. SCC detection is linear (Tarjan), and elementary cycle enumeration is polynomial per SCC (Johnson), but maintaining the full transition history may be impractical for very large tenants over long periods.  
  
\item \textbf{In-sample validation.} The backtesting evaluates in-sample consistency: ground truth and predictions are derived from the same observation period. While the protocol minimizes circularity (the observation window alone cannot satisfy the ground-truth criterion), the precision and recall figures are upper bounds on out-of-sample performance. See Appendix~\ref{app:backtesting}, Section~\ref{app:caveats} for a detailed discussion.  
\end{enumerate}  

\newpage
\section{Conclusion}\label{sec:conclusion}  
  
Cloud identity governance assumes a single notion of permission equivalence: identities are equivalent if they hold the same roles at the same scopes at a snapshot. We show that a second, independent notion exists. Beyond structural equivalence (captured by graph fibration), permission states admit \emph{temporal equivalence}: states are equivalent if they lie in the same strongly connected component of the fiber transition graph, a \emph{privilege circuit}, meaning they are mutually reachable under normal operational dynamics.  
  
\paragraph{}  
Privilege circuits admit a three-type governance classification based on edge symmetry: monoliths (static), oscillators (reversible cycling), and ratchets (irreversible escalation). A primorial invariant computed over elementary cycles provides a topological fingerprint that enables comparison across observation windows. A lineage partition organizes transitions into stable compartments, making temporal analysis well-defined.  
  
\paragraph{}  
  
The methodological contribution is a multi-scale windowed analysis that transforms the framework from descriptive to predictive. Evaluation on a large enterprise Azure tenant validates the framework at every layer. Backtesting across two lineages demonstrates that early observation of ratchet-type privilege circuits predicts long-term structural stability with Precision=1.000, with zero false positives. A third lineage (the second-largest absorber) shows zero privilege circuits despite active merger behavior, confirming that the framework discriminates between structural consolidation and temporal escalation.  
  
\paragraph{}  
The strongest ratchet signals recur across 18 of 21 independent window sizes, confirming that asymmetric privilege escalation is structural, not transient. These patterns are invisible to any single-snapshot audit. The actionable signals are absorber lineages, ratchet-type privilege circuits, and permission envelopes: structures that point-in-time analysis systematically misses.  
  
\paragraph{}  
These results reframe cloud governance as inherently spatio-temporal. The framework provides both the mathematical language to describe temporal permission equivalence and a predictive tool to detect it.  

\newpage
\appendix  
\section{Coalescent Analysis of Lineage Mergers}\label{app:coalescent}  
  
This appendix presents a coalescent analysis of inter-lineage merger dynamics, testing whether the merger process follows known models from population genetics.  
  
\subsection{Dormancy parameters}  
  
\begin{center}  
\begin{tabular}{@{}ll@{}}  
\toprule  
\textbf{Parameter} & \textbf{Value} \\  
\midrule  
Dormancy rate $\delta$ & $0.241 \pm 0.138$ \\  
Reactivation rate $\rho$ & $0.012 \pm 0.017$ \\  
Stationary dormant fraction $\delta/(\delta + \rho)$ & $0.951$ \\  
Coalescent slowdown factor $1 + \delta/\rho$ & $20.4\times$ \\  
Mean dormancy spell & 3.1 snapshots \\  
Median dormancy spell & 2.0 snapshots \\  
Max dormancy spell & 81 snapshots \\  
\bottomrule  
\end{tabular}  
\end{center}  
  
The asymmetry $\delta / \rho \approx 19.4$ produces a large seed bank structurally analogous to the seed bank coalescent of Blath et al.~\cite{blath2013seedbank}.  
  
\subsection{Testing the Kingman coalescent}  
  
Over the observation period, approximately 90 merger events were recorded, predominantly binary (83\%; arity range $[2, 5]$). We test the Kingman coalescent~\cite{kingman1982coalescent} as a baseline.  
  
\medskip  
\noindent\textbf{Regime 1 (Policy-driven sweep).}  
Lineage~$L_A$ absorbed approximately 65 lineages, accounting for 74\% of all absorptions. Under Kingman's uniform pairwise rates, the expected number of mergers involving any specific lineage is ${\approx}\,0.42$; $L_A$'s count is a $99.5\sigma$ event. Burst dynamics (15 simultaneous absorptions at one snapshot, 10 at another) have $P(k \geq 15) \approx 10^{-13}$ under Poisson($\lambda \approx 1$). These are operational standardization actions.  
  
\medskip  
\noindent\textbf{Regime 2 (Stasis with seed bank).}  
Excluding $L_A$, the remaining 23 absorptions give $\lambda \approx 0.26$ mergers per snapshot. The seed-bank-adjusted Kingman rate with approximately 20 active lineages predicts a comparable rate, matching the observed value. However, the model fails structurally: one snapshot has 7 simultaneous mergers ($p < 10^{-6}$), $L_C$ commands 35\% of residual absorptions, and the dispersion index of 2.73 rejects Poisson homogeneity ($z = 11.4$).  
  
\subsection{Intra-lineage regime census}  
  
\begin{center}  
\begin{tabular}{@{}lrl@{}}  
\toprule  
\textbf{Regime} & \textbf{Count} & \textbf{Interpretation} \\  
\midrule  
Static & 19 & Assembled once, never changed \\  
Pure fragmentation & 12 & Only fiber splitting \\  
Fragmentation-leaning & 3 & Mostly splitting, some coalescence \\  
Balanced & 5 & Comparable splitting and merging \\  
Coalescent-leaning & 5 & Merging dominates splitting \\  
\midrule  
Total composite & 48 & \\  
\bottomrule  
\end{tabular}  
\end{center}  
  
The dominant pattern: consolidation at the lineage level (macro), diversification at the fiber level (micro).  
  
\newpage  
\section{Backtesting Methodology}\label{app:backtesting}  
  
This appendix provides a complete specification of the backtesting protocol used to validate the early-observation predictor for R-stable privilege circuits. The main text (Section~\ref{sec:backtesting}) defines the predictor and fiber FP rate; Section~\ref{sec:backtest-results} presents the results. Here we detail the full protocol, the metrics computation, the design rationale for each methodological choice, and the caveats that bound the strength of the conclusions.  
  
\subsection{Overview and purpose}  
  
The stability sweep (Section~\ref{sec:stability}) identifies NHIs whose ratchet-type privilege circuits persist across many observation window sizes. This is an expensive computation: for base $b$ with anchor at snapshot $t_0$, the sweep evaluates $\lfloor t_0 / b \rfloor$ window sizes per NHI. Backtesting asks a simpler question: \emph{can we predict the outcome of the full sweep from a single early observation?}  
  
If the answer is yes with high precision, then the early observation can serve as an operational detector: a lightweight test that fires after a small number of snapshots and reliably identifies NHIs that will prove to be structural ratchets over the full observation history.  
  
The backtesting protocol is structured as three sequential phases, each building on the output of the previous one.  
  
\subsection{Phase 1: Ground-truth construction}\label{app:ground-truth}  
  
\subsubsection{Full stability sweep}  
  
For a given base $b$ and anchor snapshot $t_0$, the window-size list is:  
$  
H_b = \{b,\; 2b,\; 3b,\; \ldots,\; k_{\max} \cdot b\}, \quad k_{\max} = \lfloor t_0 / b \rfloor  
$  
For each NHI valid at $t_0$, the sweep computes the privilege circuit type (M, O, or R) at every window size $h \in H_b$ for which the NHI is valid (present in the target lineage at every snapshot in $[t_0 + 1 - h,\; t_0]$). Window sizes where the NHI is dormant, too young, or in a different lineage are skipped.  
  
\subsubsection{R-stability criterion}  
  
\begin{definition}[Ground-truth R-stability]\label{def:gt-stable}  
An NHI is \emph{R-stable} under base $b$ with minimum recurrence $K$ if it shows circuit type R at $\geq K$ distinct window sizes in $H_b$. The \emph{stability count} is the number of R-type window sizes.  
\end{definition}  
  
In our experiments, $K = 3$. An NHI that shows R at $h = 10, 15, 25$ (three window sizes) is R-stable; one that shows R at $h = 10, 15$ only (two window sizes) is not.  
  
\subsubsection{Borderline exclusion}  
  
NHIs that barely meet the R-stability threshold pose a ground-truth reliability problem: they are declared stable, but their stability count is low enough that a slightly different window alignment or a single additional snapshot could change the classification.  
  
\begin{definition}[Borderline NHI]\label{def:borderline}  
An NHI is \emph{borderline} if it is R-stable (count $\geq K$) but its stability count is strictly less than a reference threshold $K_{\mathrm{ref}} > K$. In our experiments, $K = 3$ and $K_{\mathrm{ref}} = 4$, so borderline NHIs have stability count exactly 3.  
\end{definition}  
  
Borderline NHIs are excluded from all subsequent scoring: they do not contribute to the confusion matrix, the fiber FP rate, or any aggregate metric. Their ground truth is too uncertain to serve as reliable reference. The number of excluded borderline NHIs is reported at each base (e.g., 8 at base~5, 17 at base~6, 15 at base~7 in Lineage~$L_A$) so the reader can assess the impact of the exclusion.  
  
\emph{Rationale.} The alternative, including borderline NHIs with their nominal ground-truth label, risks inflating or deflating precision depending on whether the borderline cases happen to be predicted correctly. Excluding them is the conservative choice: it narrows the scored population to NHIs whose ground truth is unambiguous (either clearly stable with count $\geq K_{\mathrm{ref}}$, or clearly not stable with count $< K$). The gap between $K$ and $K_{\mathrm{ref}}$ should be reported and its sensitivity tested (see Section~\ref{app:caveats}).  

\subsection{Phase 2: Early-observation signal}\label{app:early-signal}  
  
\subsubsection{Observation window}  
  
The early-observation predictor examines the NHI's circuit type at a single window size:  
$  
h_{\mathrm{obs}} = K \cdot b  
$  
This is the $K$-th element of $H_b$ (the smallest window at which $K$ recurrences of the base period have elapsed). For $K = 3$ and $b = 5$, $h_{\mathrm{obs}} = 15$; for $b = 7$, $h_{\mathrm{obs}} = 21$.  
  
\emph{Rationale.} At $h_{\mathrm{obs}} = K \cdot b$, the observation window spans exactly $K$ complete base periods. This is the earliest point at which the R-stability criterion ($\geq K$ R-type windows) could \emph{in principle} be satisfied, because the sweep at window sizes $b, 2b, \ldots, Kb$ provides exactly $K$ data points. Using an earlier observation window would mean the stability criterion is mathematically unsatisfiable at the time of prediction; using a later one delays detection unnecessarily.  
  
\subsubsection{Signal-positive NHIs}  
  
An NHI is \emph{signal-positive} at base $b$ if:  
\begin{enumerate}  
\item It has valid observation data at $h_{\mathrm{obs}}$ (the NHI was present in the lineage at every snapshot in the observation window), \emph{and}  
\item Its circuit type at $h_{\mathrm{obs}}$ is R.  
\end{enumerate}  
  
NHIs without valid observation data at $h_{\mathrm{obs}}$ (typically because they are too young: first seen after snapshot $t_0 - h_{\mathrm{obs}}$) are \emph{skipped} and excluded from all scoring. The skip count is reported as ``skipped-no-obs'' in the output.  
  
\subsubsection{False-positive concentration}  
  
In Lineage~$L_A$ at base~5, the predictor produces 6 false positives. All 6 originate from a single fiber (6 of 11 signal-positive NHIs in that fiber are not R-stable, a within-fiber false-positive rate of 0.55). No other fiber produces any false positives at any base.  
  
This concentration is structurally informative. Fibers represent provisioning policies (Definition~\ref{def:fiber}): NHIs in the same fiber hold identical permission profiles. When false positives cluster within a single fiber, the noise source is the provisioning pattern, not the prediction methodology. In this case, the fiber's permission dynamics mimic a ratchet signature at the 15-snapshot scale ($h_{\mathrm{obs}} = 15$) but do not persist at coarser scales. At bases 6 and 7 ($h_{\mathrm{obs}} = 18$ and $21$), the same fiber produces zero false positives.  
  
This observation suggests a natural operational refinement: fibers with elevated within-fiber false-positive rates at a given base could be flagged for closer inspection or excluded from early alerting at that base. 
We do not formalize this as a filtering methodology here, as the base-5 artifact vanishes at coarser scales without any intervention, and the base-6 and base-7 results achieve Precision=1.000 unfiltered.    
  
\subsection{Phase 3: Scoring}\label{app:scoring}  
  
\subsubsection{Scoreable population}  
  
An NHI is \emph{scoreable} if:  
\begin{enumerate}  
\item It has valid observation data at $h_{\mathrm{obs}}$ (not skipped), \emph{and}  
\item It is not borderline (ground truth is unambiguous).  
\end{enumerate}  
  
The scoreable population is the denominator for all metrics. Its composition is reported explicitly: ``Scored NHIs: $N$ (R-stable=$N_+$, not-stable=$N_-$, skipped-no-obs=$N_{\mathrm{skip}}$, skipped-borderline=$N_{\mathrm{border}}$).''  
  
\subsubsection{Evaluation}  
  
For each scoreable NHI, predict R-stable if and only if the NHI is signal-positive (circuit type R at $h_{\mathrm{obs}}$).

\subsubsection{Confusion matrix}  
  
Each prediction is scored against the ground truth (R-stable or not) to produce the standard four outcomes:  
  
\begin{center}  
\begin{tabular}{@{}lcc@{}}  
\toprule  
& \textbf{Ground truth: R-stable} & \textbf{Ground truth: not stable} \\  
\midrule  
\textbf{Predicted: R-stable} & True Positive (TP) & False Positive (FP) \\  
\textbf{Predicted: not stable} & False Negative (FN) & True Negative (TN) \\  
\bottomrule  
\end{tabular}  
\end{center}  
  
\subsubsection{Metrics}  
  
From the confusion matrix:  
\begin{align}
\text{Precision} &= \frac{\mathrm{TP}}{\mathrm{TP} + \mathrm{FP}} \label{eq:precision} \\[4pt]
\text{Recall} &= \frac{\mathrm{TP}}{\mathrm{TP} + \mathrm{FN}} \label{eq:recall} \\[4pt]
\text{F1} &= \frac{2 \cdot \text{Precision} \cdot \text{Recall}}{\text{Precision} + \text{Recall}} \label{eq:f1}
\end{align}
  
\emph{Precision} is the primary metric for operational deployment: it measures the fraction of alerts that are correct. A predictor with Precision = 1.000 generates zero false alarms. \emph{Recall} measures coverage: the fraction of genuinely R-stable NHIs that are detected. \emph{F1} is the scale mean, balancing both.  
  
In a governance context, precision is more important than recall: a missed ratchet (false negative) will be caught at the next observation cycle or by the full stability sweep; a false alarm (false positive) erodes trust in the alerting system and wastes investigation effort.  
  
\subsection{Per-base independence}\label{app:per-base-independence}  
  
Each base $b$ is evaluated as a fully independent experiment:  
  
\begin{itemize}  
\item \textbf{Independent ground truth.} The stability sweep for base $b$ uses window sizes $H_b = \{b, 2b, \ldots\}$. An NHI's R-stability under base~5 and under base~7 are separate questions with potentially different answers, because the window sizes probe different temporal scales.  
\item \textbf{Independent observation window.} $h_{\mathrm{obs}} = K \cdot b$ differs across bases (15, 18, 21 for bases 5, 6, 7 with $K = 3$).  
\end{itemize}  
  
This per-base independence contrasts with the auto-stability sweep (Definition~\ref{def:autosweep}), where NHIs are \emph{pinned} to the lowest base at which they achieve stability. In backtesting, no pinning occurs: the same NHI may be scored independently under each base. This is deliberate: backtesting evaluates the predictor's reliability at each temporal scale separately, and pinning would conflate the per-scale assessment.  
  
\subsection{Aggregate metrics}\label{app:aggregate}  
  
The aggregate confusion matrix is the element-wise sum across all bases:  
\[  
\mathrm{TP}_{\mathrm{agg}} = \sum_{b} \mathrm{TP}_b, \quad  
\mathrm{FP}_{\mathrm{agg}} = \sum_{b} \mathrm{FP}_b, \quad \text{etc.}  
\]

\emph{Important caveat.} The aggregate counts NHIs multiple times (once per base). An NHI scored at all three bases contributes three entries to the aggregate. The aggregate ``2,348 NHIs total'' in Lineage~$L_A$ is $835 + 807 + 706$, not 2,348 unique NHIs. The aggregate is a pooled estimate of per-base predictor quality, not a population-level metric. It is useful for assessing the overall reliability of the early-observation approach across temporal scales, but should not be interpreted as a count of distinct NHIs.  
  
\subsection{Relationship between stability sweep and backtesting counts}\label{app:count-discrepancy}  
  
The stability sweep with pinning (Section~\ref{sec:circuit-detection}) reports 140 R-stable NHIs pinned to base~5 in Lineage~$L_A$ (using $K = 3$). The backtesting Phase~1 for the same base reports 132 R-stable NHIs. The discrepancy arises because backtesting applies the stricter reference threshold $K_{\mathrm{ref}} = 4$: the 8 NHIs with stability count exactly 3 are classified as borderline and excluded from the R-stable ground truth. The stability sweep uses only $K = 3$ and includes them.  
  
This is by design: the stability sweep is a detection tool (flag everything that meets the minimum threshold), while backtesting is a validation tool (evaluate predictor quality against unambiguous ground truth). The two serve different purposes and appropriately use different thresholds.  
  
\subsection{Caveats and limitations}\label{app:caveats}  
  
\subsubsection{In-sample validation}  
  
The backtesting protocol evaluates the predictor on the same observation period used to construct ground truth. The observation window $h_{\mathrm{obs}}$ is a subset of the full sweep's window-size list $H_b$, and both are applied to the same time series. There is no temporal hold-out: we do not train on one period and test on a later one.  
  
This is mitigated by the protocol's structure: the observation window alone cannot satisfy the R-stability criterion ($K = 3$ requires R-type at three distinct window sizes, but the observation examines only one). An NHI that shows R at $h_{\mathrm{obs}} = 15$ but nowhere else would be ground-truth negative. The ground truth is thus not mechanically determined by the observation. Nevertheless, the precision and recall figures are upper bounds on what would be observed in a true out-of-sample evaluation. A temporal hold-out test (e.g., using the first half of the observation period for prediction and the second half for ground truth) would provide stronger validation but requires a longer observation history.  
  
\subsubsection{Parameter sensitivity}  
  
The backtesting results depend on several configurable parameters:  
  
\begin{center}  
\begin{tabular}{@{}llp{7.5cm}@{}}  
\toprule  
\textbf{Parameter} & \textbf{Value used} & \textbf{Sensitivity concern} \\  
\midrule  
$K$ (min recurrence) & 3 & Lower values increase recall but may admit transient ratchets as ground-truth positive \\  
$K_{\mathrm{ref}}$ (reference threshold) & 4 & Higher values exclude more borderline NHIs, narrowing the scored population \\
\bottomrule  
\end{tabular}  
\end{center}  
  
A complete sensitivity analysis would vary each parameter and report the resulting precision/recall curves. The current results use a single parameter setting; the robustness of the Precision = 1.000 finding across parameter variations is not established, though the zero-FP result at bases 6 and 7 suggests that the finding is not parameter-dependent at those scales.  
  
\subsubsection{Scale-dependent artifacts}  
  
The most informative finding from the multi-base design is the single fiber producing false positives at base~5 ($h_{\mathrm{obs}} = 15$) but not at base~6 ($h_{\mathrm{obs}} = 18$) or base~7 ($h_{\mathrm{obs}} = 21$). The artifact is scale-specific: the provisioning pattern in that fiber mimics a ratchet signature at the 15-snapshot scale but not at coarser scales. This demonstrates that multi-base evaluation is not merely redundant, it reveals the temporal resolution at which the signal becomes reliable.  
  
\newpage  
\bibliographystyle{plain}

\begin{thebibliography}{10}  
  
\bibitem{boldi2002fibrations}  
P.~Boldi and S.~Vigna.  
\newblock Fibrations of graphs.  
\newblock {\em Discrete Mathematics}, 243(1-3):21--66, 2002.  
  
\bibitem{morone2020fibrations}  
F.~Morone, I.~Leifer, and H.~A. Makse.  
\newblock Fibration symmetries uncover the building blocks of biological  
  networks.  
\newblock {\em Proceedings of the National Academy of Sciences},  
  117(15):8306--8314, 2020.  
  
\bibitem{kingman1982coalescent}  
J.~F.~C. Kingman.  
\newblock The coalescent.  
\newblock {\em Stochastic Processes and their Applications}, 13(3):235--248,  
  1982.  
  
\bibitem{blath2013seedbank}  
J.~Blath, A.~Gonz\'{a}lez~Casanova, N.~Kurt, and M.~Wilke-Berenguer.  
\newblock A new coalescent for seed-bank models.  
\newblock {\em Annals of Applied Probability}, 23(5):1832--1865, 2013.  
  
\bibitem{tarjan1972depth}  
R.~Tarjan.  
\newblock Depth-first search and linear graph algorithms.  
\newblock {\em SIAM Journal on Computing}, 1(2):146--160, 1972.  
  
\bibitem{johnson1975}  
D.~B. Johnson.  
\newblock Finding all the elementary circuits of a directed graph.  
\newblock {\em SIAM Journal on Computing}, 4(1):77--84, 1975.  
  
\bibitem{parisel2025scoring}  
C.~Parisel.  
\newblock Scoring {A}zure permissions with metric spaces.  
\newblock {\em arXiv preprint arXiv:2504.13747}, 2025.  

\bibitem{backes2018semantics}  
J.~Backes, P.~Bolignano, B.~Cook, C.~Dodge, A.~Gacek, K.~Luckow,  
  N.~Rungta, O.~Tkachuk, and C.~Varming.  
\newblock Semantic-based automated reasoning for {AWS} access policies using  
  {SMT}.  
\newblock In {\em Proceedings of the 18th International Conference on Formal  
  Methods in Computer-Aided Design (FMCAD)}, 2018.  

\bibitem{aws2019accessanalyzer}  
{Amazon Web Services}.  
\newblock {AWS IAM Access Analyzer}: Automated reasoning for security.  
\newblock \url{https://docs.aws.amazon.com/IAM/latest/UserGuide/what-is-access-analyzer.html}.  

\bibitem{microsoft2023entra}  
{Microsoft}.  
\newblock {Microsoft Entra Permissions Management}: Cloud infrastructure  
  entitlement management.  
\newblock \url{https://learn.microsoft.com/en-us/entra/permissions-management/}

\bibitem{sheyner2002automated}  
O.~Sheyner, J.~Haines, S.~Jha, R.~Lippmann, and J.~M. Wing.  
\newblock Automated generation and analysis of attack graphs.  
\newblock In {\em Proceedings of the IEEE Symposium on Security and Privacy  
  (S\&P)}, pp.~254--265, 2002.  
  
\bibitem{ou2006scalable}  
X.~Ou, S.~Govindavajhala, and A.~W. Appel.  
\newblock {MulVAL}: A logic-based network security analyzer.  
\newblock In {\em Proceedings of the 14th USENIX Security Symposium},  
  pp.~113--128, 2005.  
\end{thebibliography}

\end{document}